\DeclareMathOperator{\rank}{rank} %
\DeclareMathOperator{\diag}{diag} %
\DeclareMathOperator{\tr}{tr} %
\DeclareMathOperator{\im}{Im} %
\DeclareMathOperator{\re}{Re} %
\DeclareMathAlphabet{\mathpzc}{OT1}{pzc}{m}{it}
\newcounter{local}
\renewcommand\theenumi{\protect\setcounter{local}%
  {171+\the\value{enumi}}\protect\ding{\value{local}}}
\theoremstyle{definition}
\newtheorem{thm}{Theorem}
\newtheorem{lem}{Lemma}
\newtheorem{rmk}{Remark}
\newtheorem{exam}{Example}
\title{\LARGE \bf Pure Gaussian quantum states from passive Hamiltonians and an active local dissipative process}
\author{Shan Ma,~Matthew J. Woolley,~Ian R. Petersen, and Naoki Yamamoto% <-this % stops a space
\thanks{This work was supported by the Australian Research Council (ARC), the Australian Academy of Science, and the Japan Society for the Promotion of Science (JSPS).}% <-this % stops a space
\thanks{S. Ma, M. J. Woolley and I. R. Petersen are with the
School of Engineering and Information Technology, University of New South Wales 
at the Australian Defence Force Academy, Canberra, ACT 2600, Australia. 
        {\tt\small shanma.adfa@gmail.com  m.woolley@adfa.edu.au  i.r.petersen@gmail.com  }}%
\thanks{N. Yamamoto is with the Department of Applied Physics and Physico-Informatics, 
Keio University, Yokohama 223-8522, Japan. 
        {\tt\small yamamoto@appi.keio.ac.jp}}%
}
\begin{document}

\maketitle
\thispagestyle{empty}
\pagestyle{empty}

\begin{abstract}
We investigate the problem of preparing a pure Gaussian state via reservoir engineering. In particular, we consider a linear quantum system with a passive Hamiltonian and with a single reservoir which acts only on a single site of the system. We then give a full parametrization of the pure Gaussian states that can be prepared by this type of quantum system.
\end{abstract}

%%%%%%%%%%%%%%%%%%%%%%%%%%%%%%%%%%%%%%%%%%%
%%%%%%%%%%%%%%%%%%%%%%%%%%%%%%%%%%%%%%%%%%%
%%%%%%%%%%%%%%%%%%%%%%%%%%%%%%%%%%%%%%%%%%%

\section{Introduction}
Recently, the problem of deterministically preparing a pure Gaussian state 
in a linear quantum system has been studied in the literature 
\cite{KY12:pra,Y12:ptrsa,IY13:pra, MWPY14:msc,MWPY16:arxiv,MPW15:arxiv}. 
The main idea is to construct coherent and dissipative processes such that the quantum system 
is strictly stable and  driven into a desired target pure Gaussian state. 
This approach is often referred to as {\it reservoir engineering} 
\cite{PCZ96:prl,WC14:pra}. 
Let us consider a linear open quantum system that obeys a Markovian Lindblad 
master equation~\cite{WM10:book}: 
\begin{align}\label{MME}
        \frac{d}{d t}\hat{\rho} 
            =-i[\hat{H},\; \hat{\rho}]
                +\sum\limits_{j=1}^{K}\left(
                       \hat{c}_{j}\hat{\rho} \hat{c}_{j}^{*}
                            -\frac{1}{2}\hat{c}_{j}^{*}\hat{c}_{j}\hat{\rho} 
                                -\frac{1}{2}\hat{\rho} \hat{c}_{j}^{*}\hat{c}_{j}\right),
\end{align}
where $\hat{\rho}$ is the density operator, $\hat{H}=\hat{H}^*$ 
represents the  system Hamiltonian, and $\hat{c}_{j}$ is a Lindblad operator 
that represents a system--reservoir interaction. 
Let 
$\hat{L}\triangleq\left[
\hat{c}_{1} \;\hat{c}_{2} \;\cdots \;\hat{c}_{K}
\right]^{\top}$ be the vector of Lindblad operators and for convenience, we call $\hat{L}$ the \emph{coupling vector}. 
The Lindblad 
master equation~\eqref{MME} describes the dynamics of a quantum system that interacts 
with many degrees of freedom in a dissipative environment. Under some conditions, the Lindblad 
master equation~\eqref{MME} can be strictly stable and has a unique steady state 
$\lim_{t \rightarrow \infty}\hat{\rho}(t)=\hat{\rho}(\infty)$. 
Based on this fact, it was shown in~\cite{KY12:pra} that 
any pure Gaussian state can be  prepared in the above dissipative 
way by selecting a suitable pair of operators $\left(\hat{H},\;\hat{L}\right)$. However, for some pure Gaussian states, the quantum systems that generate them are hard to implement experimentally, mainly because the operators $\left(\hat{H},\;\hat{L}\right)$ have a complex structure. 

This paper complements our previous work~\cite{MPW15:arxiv}. In this paper, we consider linear quantum systems subject to the following two constraints. (\textrm{i}) The Hamiltonian $\hat{H}$ is of the form 
$\hat{H}=\sum\limits_{j=1}^{N}\sum\limits_{k=j}^{N} g_{jk}\left(\hat{q}_{j}\hat{q}_{k}+\hat{p}_{j}\hat{p}_{k} \right)$, where $g_{jk}\in \mathbb{R}$, $ 1\le j\le k\le N$.  This type of Hamiltonian describes a set of passive beam-splitter-like interactions. (\textrm{ii})
 The system is \emph{locally} coupled to a single reservoir. In other words, the coupling vector $\hat{L}$ consists of only one Lindblad operator which acts only on a single site of the system. It is relatively simple to implement a quantum system subject to the two constraints (\textrm{i}) and (\textrm{ii}).   We parametrize the class of pure Gaussian states that can be prepared by this type of quantum system.

\textit{Notation.} 
$\mathbb{R}$ denotes the set of real numbers. $\mathbb{R}^{m \times n}$ denotes the set of real $m\times n$ matrices. $\mathbb{C}$ denotes the set of complex numbers. $\mathbb{C}^{m \times n}$ denotes the set of complex-entried
$m\times n$ matrices. $I_{n}$ denotes the $n\times n$ identity matrix. $0_{m\times n}$ denotes the $m\times n$ zero matrix.  The superscript ${}^{\ast}$ denotes either 
the complex conjugate of a complex number or the adjoint of an operator.
For a matrix $A=[A_{jk}]$ whose entries $A_{jk}$ are complex numbers or 
operators, we define $A^{\top}=[A_{kj}]$, $A^{\dagger}=[A_{kj}^{\ast}]$. For a matrix $A=A^{\top}\in \mathbb{R}^{n \times n}$, $A>0$ means that $A$ is positive definite. 
 $\diag[A_{1},\cdots,A_{n}]$ denotes a block diagonal matrix with diagonal blocks $A_{j}$, $j=1,2,\cdots,n$.  $\det(A)$ denotes the determinant of the matrix $A$.
%%%%%%%%%%%%%%%%%%%%%%%%%%%%%%%%%%%%%%%%%%%
%%%%%%%%%%%%%%%%%%%%%%%%%%%%%%%%%%%%%%%%%%%
%%%%%%%%%%%%%%%%%%%%%%%%%%%%%%%%%%%%%%%%%%%

\section{Preliminaries} \label{Preliminaries}
Consider an $N$-mode continuous-variable quantum system. Let $\hat{q}_{j}$ and $ \hat{p}_{j}$ be the position and momentum operators for the $j$th mode, respectively. Then they satisfy the following commutation relations ($\hbar=1$)
\begin{align*}
\left[\hat{q}_{j}, \hat{p}_{k}\right]=i\delta_{jk}, \quad \left[\hat{q}_{j}, \hat{q}_{k}\right]=0,\quad \text{and}\;\; \left[\hat{p}_{j}, \hat{p}_{k}\right]=0. 
\end{align*}
Let us define a column vector of operators
$\hat{x}=\left[\hat{q}_{1}\;\cdots\;\hat{q}_{N}\;\; \hat{p}_{1}\;\cdots\;\hat{p}_{N}\right]^{\top}$.  Then we have
\begin{align}
\label{commutation 1}
\left[\hat{x}, \hat{x}^{\top}\right]
=\hat{x}\hat{x}^{\top}-\left(\hat{x}\hat{x}^{\top}\right)^{\top}
  =i\Sigma, \quad \Sigma\triangleq\begin{bmatrix}
         0 & I_{N}\\
-I_{N} &0
\end{bmatrix}. 
\end{align}

Let $\hat{\rho}$ be the density operator of the system. 
Then the mean value of the  
vector $\hat{x} $ is given by 
$\langle \hat{x} \rangle 
=\left[\tr(\hat{q}_{1}\hat{\rho})\;\cdots\;\tr(\hat{q}_{N}\hat{\rho})
\; \tr(\hat{p}_{1}\hat{\rho})\;\cdots\;\tr(\hat{p}_{N}\hat{\rho})\right]^{\top}$ 
and the covariance matrix is given by   $V=\frac{1}{2}\langle \triangle\hat{x}{\triangle\hat{x}}^{\top}+(\triangle\hat{x}{\triangle\hat{x}}^{\top})^{\top} \rangle$, where $\triangle\hat{x}=\hat{x}-\langle \hat{x}\rangle$. 
A Gaussian state is entirely specified by its  mean vector $\langle \hat{x}\rangle$ 
and its covariance matrix $V$. Since the mean vector $\langle \hat{x}\rangle$ 
contains no information about noise and entanglement, 
we restrict our attention to zero-mean Gaussian states. The purity of a Gaussian state is defined by $p=\tr(\hat{\rho}^{2})=1/\sqrt{2^{2N}\det(V)}$. A Gaussian state with  covariance matrix $V$ is pure  if and only if $\det(V)=2^{-2N}$. 
In fact, when a Gaussian state is pure, its covariance matrix $V$   always has the following decomposition 
\begin{align}\label{covariance}
V=\frac{1}{2}\begin{bmatrix}
Y^{-1} &Y^{-1}X\\
XY^{-1} &XY^{-1}X+Y 
\end{bmatrix},
\end{align}
where $X=X^{\top}\in \mathbb{R}^{N \times N}$, $Y=Y^{\top}\in \mathbb{R}^{N \times N}$ and $Y>0$~\cite{MFL11:pra}. 
For the $N$-mode vacuum state, we have $X=0$ and $Y=I_{N}$. Let $Z\triangleq X+iY$. Given $Z$, a covariance matrix $V$ can be constructed from it using~\eqref{covariance}. Thus, the matrix $Z$ fully characterizes a pure Gaussian state. We refer to $Z=X+iY$ as a \emph{Gaussian graph matrix}~\cite{MFL11:pra}. Note that, a basic property of $Z$ is $\re(Z)=\re(Z)^{\top}$ and $\im(Z)=\im(Z)^{\top}>0$. This property guarantees physicality of the corresponding state.

Suppose that the system Hamiltonian in~\eqref{MME} is a quadratic function of $\hat{x}$, i.e., 
$\hat{H}=\frac{1}{2}\hat{x}^{\top}G\hat{x}$,
with $G=G^{\top}\in \mathbb{R}^{2N \times 2N}$, the coupling vector is a linear function of  $\hat{x}$, i.e., 
$\hat{L} = C \hat{x}$,
with $C\in \mathbb{C}^{K \times 2N}$, and the dynamics of the density operator $\hat{\rho}$ 
obey the Lindblad master equation~\eqref{MME}.   Then the corresponding dynamics of 
the mean vector $\langle \hat{x}(t) \rangle$ and the covariance matrix $V(t)$ can be described by 
\begin{numcases}{}
\frac{d\langle\hat{x}(t)\rangle}{dt}=\mathcal{A}\langle\hat{x}(t)\rangle, \label{meanfunction} \\
\frac{dV(t)}{dt}=\mathcal{A}V(t)+V(t)\mathcal{A}^{\top}+\mathcal{D},  \label{covfunction}
\end{numcases}
where $\mathcal{A}=\Sigma\left(G+\im(C^{\dagger}C)\right)$, 
$\mathcal{D}=\Sigma\re(C^{\dagger}C)\Sigma^{\top}$~\cite[Chapter 6]{WM10:book}. 
The linearity of the dynamics indicates that if the initial system state $\hat{\rho}(0)$ is Gaussian, 
then the system will always be Gaussian, with mean vector $\langle \hat{x}(t) \rangle$ 
and covariance matrix $V(t)$ evolving according to~\eqref{meanfunction} and~\eqref{covfunction}, respectively. 
We shall be particularly interested in the steady state of the system with the covariance matrix $V(\infty)$.
 Recently, a necessary and sufficient condition has been derived in~\cite{KY12:pra,Y12:ptrsa} for preparing a pure Gaussian steady state via reservoir engineering.  The result is summarized as follows. 
\begin{lem}[\cite{KY12:pra,Y12:ptrsa}]\label{lem0}
Let $Z=X+iY$ be the Gaussian graph matrix of an $N$-mode pure Gaussian state. Then this pure Gaussian state is  generated by the Lindblad 
master equation~\eqref{MME} if and only if
\begin{align} \label{G}
G=\begin{bmatrix}
XRX+YRY-\Gamma Y^{-1}X-XY^{-1}\Gamma^{\top} &-XR+\Gamma Y^{-1}\\
-RX+Y^{-1}\Gamma^{\top} &R
\end{bmatrix},
\end{align}
 and 
\begin{align} \label{C}
C=P^{\top}\left[-Z\;\; I_{N}\right], 
\end{align} 
where  $R=R^{\top}\in\mathbb{R}^{ N\times N}$,  $\Gamma=-\Gamma^{\top}\in\mathbb{R}^{ N\times N}$, and  $P\in \mathbb{C}^{N\times K}$ are free matrices satisfying the following rank condition
\begin{align}
\rank\left([P\;\;\;QP\;\;\;\cdots\;\;\;Q^{N-1}P]\right)=N,\;\;  Q\triangleq -iRY+Y^{-1}\Gamma. \label{rankconstraint}
\end{align}
\end{lem}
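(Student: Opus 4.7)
The plan is to characterize pure Gaussian steady states of~\eqref{MME} through the ``nullifier'' description of the target state. The zero-mean pure Gaussian state $|\psi_{Z}\rangle$ with Gaussian graph matrix $Z=X+iY$ is uniquely identified (up to a global phase) by the $N$ linear annihilation conditions
\[
[-Z\;\; I_{N}]\,\hat{x}\,|\psi_{Z}\rangle=0,
\]
so that each nullifier $\hat{\xi}_{j}=-\sum_{k}Z_{jk}\hat{q}_{k}+\hat{p}_{j}$ acts as a generalized annihilation operator on $|\psi_{Z}\rangle$. The proof will rest on the standard fact that a pure state $|\psi\rangle\langle\psi|$ is a fixed point of the Lindblad generator in~\eqref{MME} if and only if every $\hat{c}_{j}$ annihilates $|\psi\rangle$ (using that the mean is zero so the eigenvalue of $\hat{c}_{j}$ on $|\psi_{Z}\rangle$ must vanish) and $\hat{H}|\psi\rangle$ is a scalar multiple of $|\psi\rangle$. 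I will use this to derive~\eqref{C} and~\eqref{G} and then translate uniqueness into the rank condition~\eqref{rankconstraint}.

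For the Lindblad operators, the linearity $\hat{c}_{j}=C_{j}\hat{x}$ together with $\hat{c}_{j}|\psi_{Z}\rangle=0$ forces each row $C_{j}$ to lie in the row span of $[-Z\;\;I_{N}]$; stacking over $j$ produces $C=P^{\top}[-Z\;\;I_{N}]$ for some $P\in\mathbb{C}^{N\times K}$, establishing~\eqref{C}. For the Hamiltonian, substituting $\hat{H}=\tfrac{1}{2}\hat{x}^{\top}G\hat{x}$ into the eigenvalue condition $\hat{H}|\psi_{Z}\rangle=\lambda|\psi_{Z}\rangle$ and rewriting $\hat{H}$ in the $(\hat{\xi},\hat{\xi}^{\dagger})$ basis via the symplectic map that brings $V$ to $\tfrac{1}{2}I_{2N}$ forces the coefficient matrix of the $\hat{\xi}^{\dagger}\hat{\xi}^{\dagger\top}$ block to vanish, leaving a symmetric real parameter $R$ controlling the Hermitian $\hat{\xi}^{\dagger}\hat{\xi}^{\top}$ piece and an antisymmetric real parameter $\Gamma$ for the remaining gauge freedom (which contributes a pure phase and does not affect $|\psi_{Z}\rangle$). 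Re-expressing everything in the original $(\hat{q},\hat{p})$ basis produces exactly the block structure~\eqref{G}. Equivalently, one may verify by direct substitution that, with $(G,C)$ as in~\eqref{G}--\eqref{C}, the Lyapunov equation $\mathcal{A}V+V\mathcal{A}^{\top}+\mathcal{D}=0$ is satisfied for the $V$ given by~\eqref{covariance}.

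Finally, a fixed point of~\eqref{MME} with the given $V$ is unique precisely when the drift matrix $\mathcal{A}$ is Hurwitz, so stability must be translated into a condition on $(R,\Gamma,P)$. After the nullifier change of basis, the displaced nullifier dynamics decouple into a single $N$-dimensional block with system matrix $Q=-iRY+Y^{-1}\Gamma$ driven by the input matrix built from $P$; a standard Popov--Belevitch--Hautus test then shows that $\mathcal{A}$ has no eigenvalue on the imaginary axis if and only if the pair $(Q,P)$ is controllable, i.e., the Kalman rank condition~\eqref{rankconstraint} holds. The main obstacle I anticipate is the bookkeeping in the Hamiltonian step: matching the Hermitian and skew-Hermitian parts of $C^{\dagger}C$ against $G$ after the symplectic change of basis, and reassembling the cross-terms between $X$, $Y$, $R$, and $\Gamma$ into the precise block form~\eqref{G}, is where the real work lies.
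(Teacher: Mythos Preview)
The paper does not supply a proof of Lemma~\ref{lem0}; it is quoted as a known result from the cited references \cite{KY12:pra,Y12:ptrsa}, so there is no in-paper argument to compare against.

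That said, your outline is essentially the route taken in those original references. The nullifier (dark-state) characterization $[-Z\;\;I_{N}]\hat{x}\,|\psi_{Z}\rangle=0$ is precisely how Koga and Yamamoto identify the target state, and the two structural conditions you invoke---each $\hat{c}_{j}$ annihilates $|\psi_{Z}\rangle$, and $|\psi_{Z}\rangle$ is an eigenvector of $\hat{H}$ once the Lindblad eigenvalues vanish---are exactly their ``dark state'' conditions. Parametrizing the most general quadratic $\hat{H}$ and linear $\hat{c}_{j}$ compatible with those conditions, after passing to the nullifier basis, is what produces the free matrices $R$, $\Gamma$, $P$ and the block form~\eqref{G}--\eqref{C}. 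Your plan to verify the Lyapunov identity $\mathcal{A}V+V\mathcal{A}^{\top}+\mathcal{D}=0$ by direct substitution is a legitimate alternative check and is arguably cleaner than tracking the $\hat{\xi}^{\dagger}\hat{\xi}^{\dagger\top}$ coefficients by hand.

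One point deserves care. You write that $\mathcal{A}$ is Hurwitz iff it has no imaginary-axis eigenvalue iff $(Q,P)$ is controllable. The first equivalence is not generic; it holds here only because the specific structure of $\mathcal{A}=\Sigma(G+\im(C^{\dagger}C))$ with $G$ as in~\eqref{G} and $C$ as in~\eqref{C} forces every eigenvalue to have nonpositive real part (the dissipative term $\Sigma\,\im(C^{\dagger}C)$ can only damp, never amplify, the symplectic dynamics generated by $G$). You should state and justify this explicitly before invoking the PBH test, since otherwise ``no imaginary eigenvalue'' and ``Hurwitz'' are distinct conditions. With that gap filled, the controllability of $(Q,P)$ indeed detects exactly the absence of undamped nullifier modes, which is the content of~\eqref{rankconstraint}.
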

\begin{rmk}
A pair of matrices $\left(A_{1},\;A_{2}\right)$, where $A_{1}\in\mathbb{C}^{n\times n}$ and $A_{2}\in\mathbb{C}^{n\times m}$, is said to be \emph{controllable} if  
$\rank\left([A_{2}\;\;\;A_{1}A_{2}\;\;\;\cdots\;\;\;A_{1}^{n-1}A_{2}]\right)=n$~\cite[Theorem 6.1]{C99:book}.
Therefore, the rank constraint~\eqref{rankconstraint} is equivalent to saying that $\left(Q,\;P\right)$ is controllable. 
\end{rmk}

\section{Constraints} \label{Constraints}
According to Lemma~\ref{lem0}, for some pure Gaussian states, the quantum systems that generate them are hard to implement experimentally because the operators $\hat{H}=\frac{1}{2}\hat{x}^{\top}G\hat{x}$ and $\hat{L} = C \hat{x}$ have a complex structure. Here we discuss another route. We restrict our attention to  a class of linear quantum systems that are relatively simple to implement. Then we see which pure Gaussian states can be prepared by this type of system.
  We consider linear quantum systems subject to the following two constraints. 
\begin{enumerate}
\item The Hamiltonian $\hat{H}$ is of the form 
$\hat{H}=\sum\limits_{j=1}^{N}\sum\limits_{k=j}^{N} g_{jk}\left(\hat{q}_{j}\hat{q}_{k}+\hat{p}_{j}\hat{p}_{k} \right)$, where $g_{jk}\in \mathbb{R}$, $1\le j\le k\le N$. \label{constraint1}
\item The system is \emph{locally} coupled to a \emph{single} reservoir. That is, the coupling vector $\hat{L}$ is of the form $\hat{L}=c_{1}\hat{q}_{\ell}+c_{2}\hat{p}_{\ell}$, where $c_{1}\in \mathbb{C}$, $c_{2} \in \mathbb{C}$ and $\ell\in \{1,\; 2,\;\cdots,\; N\}$. \label{constraint2}
\end{enumerate}

\begin{rmk}
The Hamiltonian in~\ref{constraint1} can be rewritten in terms of the annihilation and creation operators as 
\begin{align*}
\hat{H}=\sum\limits_{j=1}^{N}\sum\limits_{k=j}^{N} g_{jk}\left(\hat{a}_{j}^{\ast}\hat{a}_{k}+\hat{a}_{j}\hat{a}_{k}^{\ast} \right), \quad g_{jk} \in \mathbb{R},\quad  1\le j\le k\le N, 
\end{align*}
where $\hat{a}_{j}=\frac{\hat{q}_{j}+i\hat{p}_{j}}{\sqrt{2}}$ and $\hat{a}_{j}^{\ast}=\frac{\hat{q}_{j}-i\hat{p}_{j}}{\sqrt{2}}$ are the annihilation and creation operators for the $j$th mode, respectively. A Hamiltonian $\hat{H}$ of this form is called \emph{passive} and describes beam-splitter-like interactions~\cite{SMD94:pra,GM16:pra}.  
\end{rmk}

\begin{rmk}
The constraint~\ref{constraint2} implies two crucial features of the system.  First, the system is coupled to  a \emph{single}  reservoir, i.e., $K=1$ in~\eqref{MME}. Second, the corresponding Lindblad operator acts only  on a \emph{single} mode of the system. 
\end{rmk}

To illustrate this type of linear quantum system, we consider  two examples. 
\begin{exam}~\label{exam1}
We consider a ring of three   quantum harmonic  oscillators ($N=3$), as depicted in Fig.~\ref{fig1}. The  quantum harmonic  oscillators are labelled $1$ to $3$. Suppose the Hamiltonian $\hat{H}$ is given by
\begin{align*}
\hat{H}&=\sum\limits_{j=1}^{3} g_{jj}\left(\hat{q}_{j}^{2}+\hat{p}_{j}^{2} \right)+g_{12}\left(\hat{q}_{1}\hat{q}_{2}+\hat{p}_{1}\hat{p}_{2} \right)\\
&+g_{13}\left(\hat{q}_{1}\hat{q}_{3}+\hat{p}_{1}\hat{p}_{3} \right)+g_{23}\left(\hat{q}_{2}\hat{q}_{3}+\hat{p}_{2}\hat{p}_{3} \right), 
\end{align*} 
where $g_{jk} \in \mathbb{R}$, $1\le j\le k\le 3$. In addition, the system-reservoir coupling vector $\hat{L}$ consists of only one Lindblad operator which acts on the third  mode of the system. It is given by
$\hat{L}= c_{1}\hat{q}_{3}+c_{2}\hat{p}_{3}$, where $c_{1}\in \mathbb{C}$ and $ c_{2} \in \mathbb{C}$. 
This linear quantum system satisfies the constraints~\ref{constraint1} and~\ref{constraint2}. 

\begin{figure}[htbp]
\begin{center}
\includegraphics[height=3cm]{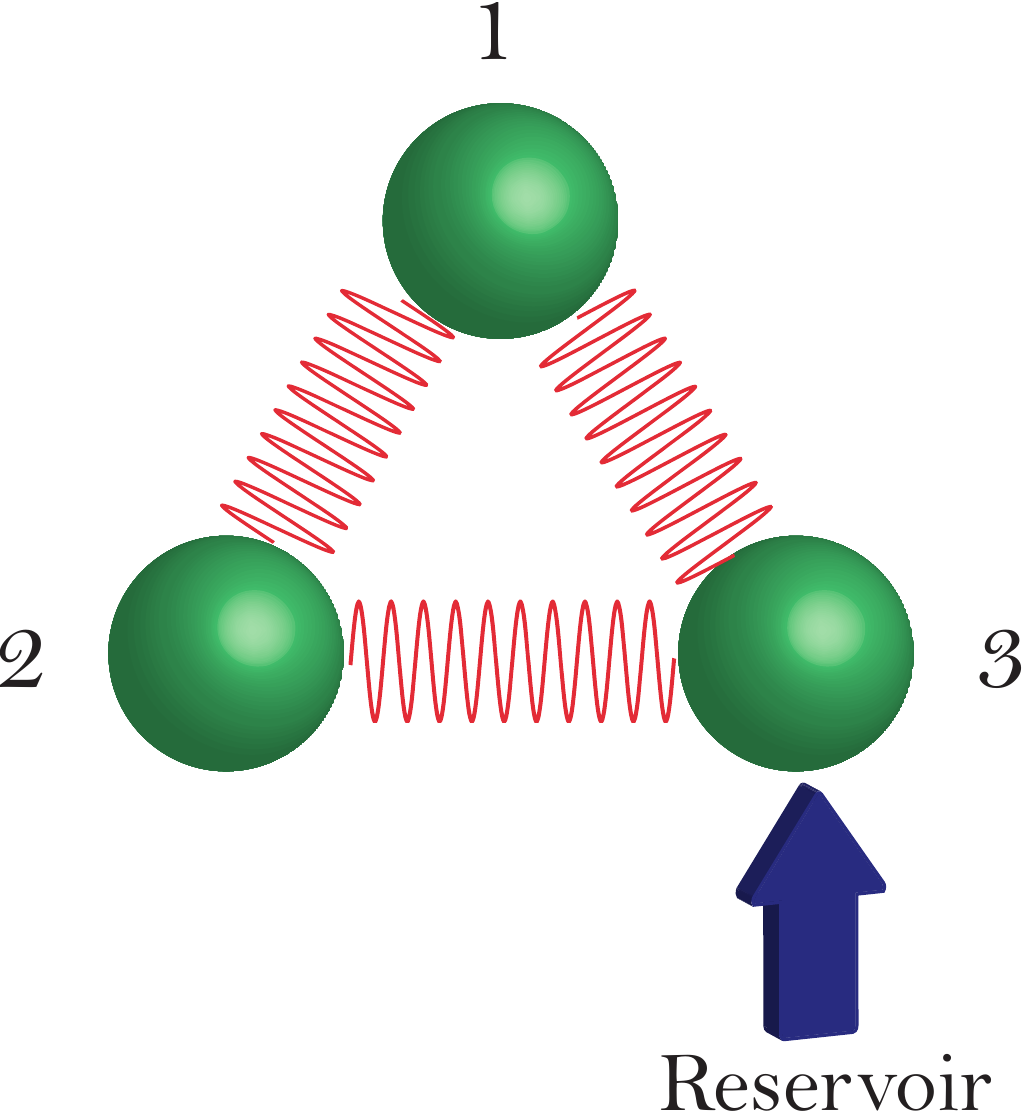}
\caption{
A ring of linearly coupled  quantum harmonic  oscillators. The system–-reservoir coupling vector $\hat{L}$ consists of only one Lindblad operator which acts on the third  mode of the system.}
\label{fig1}
\end{center}
\end{figure}
\end{exam}

\begin{exam}
We consider a chain of three  quantum harmonic  oscillators ($N=3$), as depicted in Fig.~\ref{fig2}. The  quantum harmonic  oscillators are labelled $1$ to $3$. Suppose the Hamiltonian $\hat{H}$ is given by
\begin{align*}
\hat{H}&=\sum\limits_{j=1}^{3} g_{jj}\left(\hat{q}_{j}^{2}+\hat{p}_{j}^{2} \right)+g_{12}\left(\hat{q}_{1}\hat{q}_{2}+\hat{p}_{1}\hat{p}_{2} \right)\\
&+g_{23}\left(\hat{q}_{2}\hat{q}_{3}+\hat{p}_{2}\hat{p}_{3} \right), 
\end{align*} 
where $g_{11},\; g_{22},\;g_{33},\;g_{12},\;g_{23}\in \mathbb{R}$. In addition, the system–-reservoir coupling vector $\hat{L}$ consists of only one Lindblad operator which acts  on the second mode of the system. It is given by
$
\hat{L}= c_{1}\hat{q}_{2}+c_{2}\hat{p}_{2}$,  where $c_{1}\in \mathbb{C}$ and $ c_{2} \in \mathbb{C}$. 
This linear quantum system satisfies the constraints~\ref{constraint1} and~\ref{constraint2}. 
\begin{figure}[htbp]
\begin{center}
\includegraphics[height=2cm]{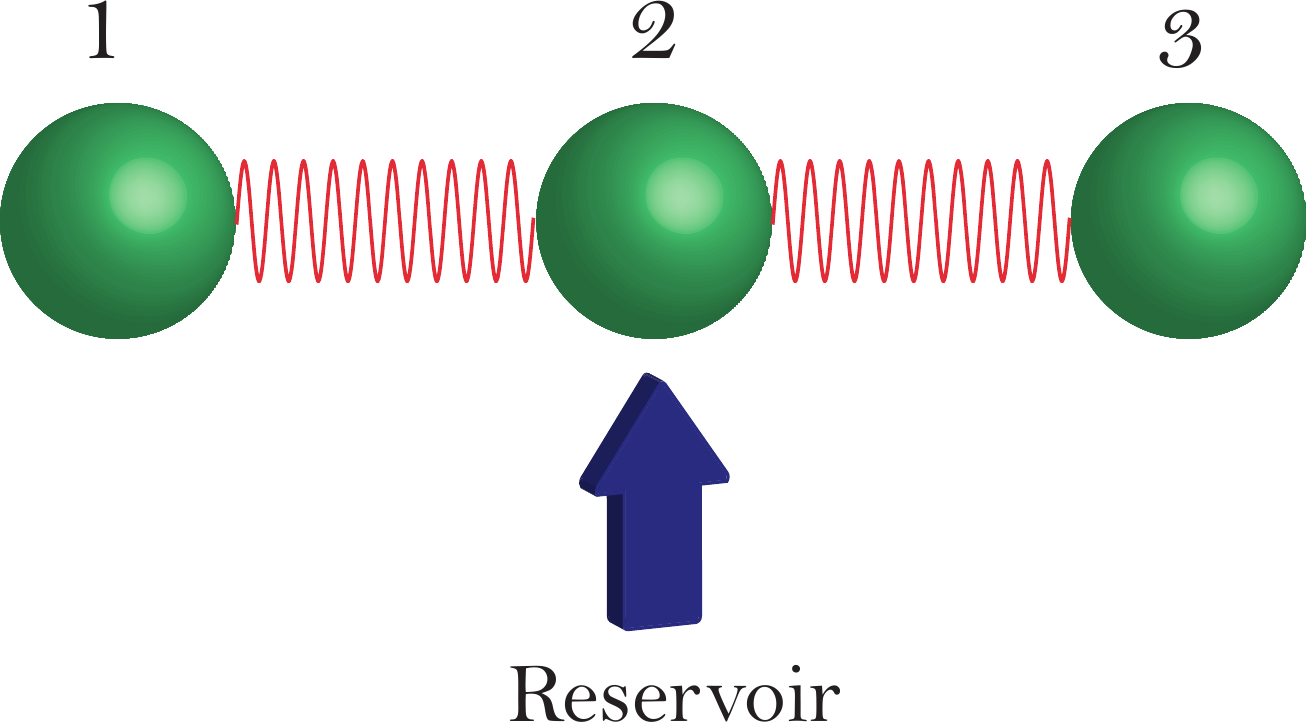}
\caption{
A chain of linearly coupled  quantum harmonic  oscillators. The system–-reservoir coupling vector $\hat{L}$ consists of only one Lindblad operator which acts on the second mode of the system.}
\label{fig2}
\end{center}
\end{figure}
\end{exam}

The class of linear quantum systems subject to~\ref{constraint1} and~\ref{constraint2} can be split into three disjoint subclasses (\uppercase\expandafter{\romannumeral1}, 
\uppercase\expandafter{\romannumeral2}, and 
\uppercase\expandafter{\romannumeral3}). The subclass \uppercase\expandafter{\romannumeral1} consists of all the \emph{unstable} linear quantum systems subject to the constraints~\ref{constraint1} and~\ref{constraint2}. For any system in the subclass \uppercase\expandafter{\romannumeral1}, there does not exist a  steady state. The subclass \uppercase\expandafter{\romannumeral2} consists of all the linear quantum systems with the constraints~\ref{constraint1} and~\ref{constraint2} that are strictly stable, and that evolve toward a mixed Gaussian steady state. The subclass \uppercase\expandafter{\romannumeral3} consists of all the linear quantum systems with the constraints~\ref{constraint1} and~\ref{constraint2}  that are strictly stable, and that evolve toward a pure Gaussian steady state. 

In the following section, we are particularly interested in the subclass \uppercase\expandafter{\romannumeral3}. Our objective is to characterize the class of pure Gaussian states that can be generated by linear quantum systems subject to~\ref{constraint1} and~\ref{constraint2}. In other words, we characterize all of the pure Gaussian states for which there exist a Hamiltonian $\hat{H}$ of the form~\ref{constraint1}  and a system--reservoir coupling vector $\hat{L}$ of the form~\ref{constraint2} such that the state is the unique steady state of the corresponding linear quantum system~\eqref{MME}. 

%
%\begin{figure}[htbp]
%\begin{center}
%\includegraphics[height=3cm]{fig3}
%\caption{
%The class of linear quantum systems subject to~\protect\ref{constraint1} and~\protect\ref{constraint2} can be split into three disjoint subclasses (\uppercase\expandafter{\romannumeral1}, 
%\uppercase\expandafter{\romannumeral2}, and 
%\uppercase\expandafter{\romannumeral3}).}
%\label{fig3}
%\end{center}
%\end{figure}

\section{Parametrization}

We give the main result. The following theorem provides a full mathematical parametrization of the pure Gaussian states that can be generated by systems subject to  the two constraints~\ref{constraint1} and~\ref{constraint2}. For clarity, we distinguish two cases: $N$ is odd and $N$ is even. 
\begin{thm} \label{theorem1}
$\;$
\begin{itemize}
\item An $N$-mode pure Gaussian state (where $N$ is odd) can be  generated by a  linear quantum system subject to the two constraints~\ref{constraint1} and~\ref{constraint2} if and only if its  Gaussian graph matrix $Z$ can be written as  
\begin{align}
Z&=\mathcal{P}^{\top}\begin{bmatrix}
\bar{z}  &0_{ 1 \times (N-1)}\\
0_{(N-1) \times 1} & \mathcal{Q}^{\top}\bar{Z}\mathcal{Q}
\end{bmatrix}\mathcal{P}, \label{thm11}\\
\bar{Z}&= \diag[\tilde{Z}_{1},\cdots,\tilde{Z}_{ \frac{N-1}{2} }],  \notag
\end{align}
 where $\mathcal{P}\in \mathbb{R}^{N \times N}$ is a permutation matrix, $\mathcal{Q}\in \mathbb{R}^{(N-1) \times (N-1)}$ is a real orthogonal matrix, 
$ \bar{z}\in \Lambda $,  and $\tilde{Z}_{j}\in \Delta$, $1\le j \le \frac{N-1}{2}$. 
\item An $N$-mode pure Gaussian state (where $N$ is even) can be  generated by a  linear quantum system subject to the two constraints~\ref{constraint1} and~\ref{constraint2} if and only if its  Gaussian graph matrix $Z$ can be written as  
\begin{align}
Z&=\mathcal{P}^{\top}\begin{bmatrix}
\bar{z}  &0_{ 1 \times (N-1)}\\
0_{(N-1) \times 1} & \mathcal{Q}^{\top}\bar{Z}\mathcal{Q}
\end{bmatrix}\mathcal{P}, \label{thm12}\\
 \bar{Z}&= \diag[\tilde{Z}_{1},\cdots,\tilde{Z}_{ \frac{N}{2} }], \notag
\end{align} 
 where $\mathcal{P}\in \mathbb{R}^{N \times N}$ is a permutation matrix, $\mathcal{Q}\in \mathbb{R}^{(N-1) \times (N-1)}$ is a real orthogonal matrix, 
$ \bar{z}\in \Lambda $, $\tilde{Z}_{1}=-\frac{1}{\bar{z}}$,  and $\tilde{Z}_{j}\in \Delta$, $2\le j \le \frac{N}{2}$. 
\end{itemize}
Here $
\Lambda \triangleq\{ z \;\;\big|\;\;  z \in \mathbb{C}\;\;\text{and}\;\;  \im(z) >0\}$, and $
\Delta \triangleq 
\bigg\{\mathpzc{Z}\;\;\Big|\; \mathpzc{Z}=\mathpzc{Z}^{\top}\in \mathbb{C}^{2\times 2}, \;\im\left(\mathpzc{Z}\right)>0, \;\big(\diag[1,-1]
\mathpzc{Z} \big)^{2} =-I_{2}, \; \text{and}\; \det(\mathpzc{Z}+ \frac{1}{\bar{z}}I_{2})=0\bigg\}$.
\end{thm}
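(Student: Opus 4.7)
I would apply Lemma~\ref{lem0} to convert the two structural constraints into algebraic identities on $(Z,R,\Gamma,P)$, and then classify the resulting system by a real orthogonal change of basis together with a parity count. Constraint~\ref{constraint1} is equivalent to $G=\diag[G_{1},G_{1}]$ with $G_{1}=G_{1}^{\top}\in\mathbb{R}^{N\times N}$; matching against~\eqref{G} then gives $R=G_{1}$, $\Gamma=XRY$, and the identity $YRY-XRX=R$, while antisymmetry of $\Gamma$ gives $XRY+YRX=0$. Combined, these amount to the single complex identity $ZRZ=-R$. Constraint~\ref{constraint2} together with $C=P^{\top}[-Z,\,I_{N}]$ forces $P$ to be a scalar multiple of a standard basis vector $e_{\ell}$, which in turn forces the $\ell$-th row of $Z$ to vanish off the diagonal entry. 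By the symmetry of $Z$, after a permutation $\mathcal{P}$ sending $\ell$ to $1$ we may write $Z=\mathcal{P}^{\top}\diag[\bar{z},Z']\mathcal{P}$ with $\bar{z}\in\Lambda$ and $Z'$ an $(N-1)\times(N-1)$ pure-Gaussian graph matrix.

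Partitioning $R=\bigl[\begin{smallmatrix}r_{1}&r^{\top}\\r&R'\end{smallmatrix}\bigr]$ and expanding $ZRZ=-R$ blockwise yields three simultaneous conditions: $r_{1}(\bar{z}^{2}+1)=0$, $Z'r=-\tfrac{1}{\bar{z}}r$, and $Z'R'Z'=-R'$. Thus $r$ is an eigenvector of $Z'$ for the eigenvalue $-1/\bar{z}$, and $(Z',R')$ inherits the same algebraic identity as $(Z,R)$. Next, I apply a symplectic transformation $T=\diag[\mathcal{O},\mathcal{O}]$ with $\mathcal{O}=\diag[1,\mathcal{Q}]$, $\mathcal{Q}\in O(N-1)$; such $T$ preserves both constraints and the block form $\diag[\bar{z},\bullet]$ while sending $Z'\mapsto\mathcal{Q}^{\top}Z'\mathcal{Q}$ and $R'\mapsto\mathcal{Q}^{\top}R'\mathcal{Q}$. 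I choose $\mathcal{Q}$ to bring $R'$ to diagonal form and $Z'$ to the finest block-diagonal form compatible with that choice. The key algebraic fact is that $Z'$ commutes with $R'^{\,2}$ (from $Z'R'=-R'Z'^{-1}$ one gets $Z'R'^{\,2}=R'^{\,2}Z'$), so $Z'$ preserves the spectral subspaces of $R'^{\,2}$; inside each equi-$|d|$ eigenspace of $R'$, further real orthogonal refinement respecting the $\pm d$ decomposition of $R'$ breaks $(Z',R')$ into irreducible pieces of size one or two. On each $2\times 2$ piece $R'$ restricts (up to an overall scalar) to $\diag[1,-1]$ and $\tilde{Z}_{j}$ satisfies $(\diag[1,-1]\tilde{Z}_{j})^{2}=-I_{2}$ with $\im(\tilde{Z}_{j})>0$, which is the defining condition of $\Delta$ modulo the determinant clause.

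The determinant clause and the dichotomy between $N$ odd and $N$ even are then pinned down by the controllability condition~\eqref{rankconstraint}. With $P=ce_{1}$ and $Q=Y^{-1}\bar{Z}RY$, a standard single-input block reduction shows that $(Q,P)$-controllability is equivalent to $(\bar{Z}'R',\bar{Z}'r)$-controllability on $\mathbb{C}^{N-1}$; since $\bar{Z}'R'$ is block-diagonal in the rotated basis, the Krylov iterates can span the whole space only if the projection of $r$ onto every irreducible block is non-zero. Consequently $-1/\bar{z}$ is an eigenvalue of \emph{every} block $\tilde{Z}_{j}$, which on a $2\times 2$ block is exactly $\det(\tilde{Z}_{j}+\tfrac{1}{\bar{z}}I_{2})=0$ (placing $\tilde{Z}_{j}\in\Delta$), and on a $1\times 1$ block forces $\tilde{Z}=-1/\bar{z}$ (ruling out the alternative $\tilde{Z}=i$, which would require $\bar{z}=i$). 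A parity count on the $(N-1)$-dimensional complement then yields the two cases~\eqref{thm11} and~\eqref{thm12}: $(N-1)/2$ blocks in $\Delta$ when $N$ is odd, and one scalar block $-1/\bar{z}$ together with $(N-2)/2$ blocks in $\Delta$ when $N$ is even. For sufficiency I reverse the construction: build $R'$ block-diagonally with $\diag[1,-1]$ on each $2\times 2$ block, choose $r$ to have non-zero projection onto every block (and $r_{1}=0$), set $P=e_{\mathcal{P}^{-1}(1)}$, and verify the hypotheses of Lemma~\ref{lem0}.

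The main obstacle is the reduction of $(Z',R')$ to a canonical block-diagonal form under real-orthogonal congruence within each equi-$|d|$ eigenspace of $R'$. Because $Z'$ is complex symmetric rather than Hermitian, standard spectral theorems do not apply directly, and the interaction with the indefinite real-symmetric ``metric'' $R'$ (possibly with mismatched multiplicities of $+d$ and $-d$, or with a non-trivial kernel) requires careful bookkeeping to preserve $\im(Z')>0$ after each refinement; handling kernel modes of $R'$ and the boundary case $\bar{z}=i$ is the most delicate part of the classification.
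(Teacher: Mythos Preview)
Your overall architecture is essentially that of the paper: reduce constraint~\ref{constraint1} via Lemma~\ref{lem0} to $ZRZ=-R$ with $\Gamma=XRY$; use constraint~\ref{constraint2} to force $P\propto e_{\ell}$ and hence $Z=\mathcal{P}^{\top}\diag[\bar z,\,Z']\mathcal{P}$; block–diagonalize $(Z',R')$ by a real orthogonal $\mathcal{Q}$; and finally use~\eqref{rankconstraint} to pin down the eigenvalue $-1/\bar z$ on every block. The sufficiency sketch also matches the paper's explicit construction.

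The genuine gap is in the block-diagonalization step. Your ``key algebraic fact'' $[Z',R'^{\,2}]=0$ only tells you that $Z'$ respects the eigenspaces of $R'^{\,2}$; it does \emph{not} bound the size of the irreducible pieces by two. If some eigenvalue $d^{2}$ of $R'^{\,2}$ has multiplicity four (say $m_{+}=m_{-}=2$), nothing in the relation $Z'JZ'=-J$ on that subspace forces a further $O(m_{+})\times O(m_{-})$ splitting into $2\times 2$ blocks. You defer controllability until after the block structure is in place, but in fact it is precisely what produces the block-size bound: single-input controllability of $(-R'Z',-r\bar z)$ (obtained from $(Q,P)$ by your own reduction) forces $R'Z'$ to be non-derogatory; since $(R'Z')^{2}=R'(Z'R'Z')=-R'^{\,2}$ is diagonalizable, non-derogatoriness upgrades to a \emph{simple} spectrum of $R'Z'$, so each nonzero eigenvalue of $R'^{\,2}$ has multiplicity at most two and $\ker R'$ has dimension at most one. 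That is the mechanism that delivers the $2\times 2$ pieces (and the single $1\times 1$ piece when $N$ is even). This is exactly the route the paper takes, invoking the non-derogatory criterion and then Theorem~1 of \cite{MPW15:arxiv} for the resulting canonical form; without it your ``further real orthogonal refinement'' need not terminate at size two, and the kernel case you flag as delicate is likewise uncontrolled.

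A minor point: your formula $Q=Y^{-1}\bar Z R Y$ (with $\bar Z$ the complex conjugate) is correct and equal to the paper's $Q=-RZ$, but the notation clashes with the paper's use of $\bar z$ and $\bar Z$ for the scalar and block-diagonal pieces, so it would be worth renaming.
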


The proof of Theorem~\ref{theorem1} is provided in the Appendix. We give some remarks to  explain Theorem~\ref{theorem1}.  

\begin{rmk}
Multiplying a Gaussian graph matrix $Z$ on the left by a permutation matrix $\mathcal{P}^{\top}$ and on the right by $\mathcal{P}$ simply corresponds to a relabeling of the $N$ modes.  Therefore, the role that the matrix $\mathcal{P}$ plays in  Equations~\eqref{thm11} and~\eqref{thm12} is not important from a practical point of view.  It simply amounts to a relabeling of the $N$ modes so that the first mode is the one that is coupled to the reservoir.  
\end{rmk}

\begin{rmk}
If $\bar{z}=i$ in Equations~\eqref{thm11} and~\eqref{thm12}, then from the condition that $\tilde{Z}_{j}\in \Delta$, we can deduce that $\tilde{Z}_{j}=i I_{2}$.  In this case, the whole state is a trivial pure Gaussian state, i.e., the $N$-mode vacuum state. It then follows from~\eqref{C} that the  system–-reservoir coupling vector $\hat{L}$ must be passive, i.e., $\hat{L}=c\left(\hat{q}_{\ell}+i\hat{p}_{\ell}\right)=\sqrt{2}c\hat{a}_{\ell}$, where $c\in\mathbb{C}$ and $\hat{a}_{\ell}=\frac{\hat{q}_{\ell}+i\hat{p}_{\ell}}{\sqrt{2}}$ is an annihilation operator. 
\end{rmk}

\begin{rmk}
The constraint $\im(z)>0$ in the definition of $\Lambda$ must be satisfied for the corresponding $Z$ to be a valid Gaussian graph matrix. For the same reason, the constraints $\mathpzc{Z}=\mathpzc{Z}^{\top}\in \mathbb{C}^{2\times 2}$ and $\im\left(\mathpzc{Z}\right)>0$ in the definition of $\Delta$ must be maintained. See the definition of the Gaussian graph matrix $Z$ in Section~\ref{Preliminaries} for details. 
\end{rmk}

\begin{rmk}
Theorem~\ref{theorem1} tells us that by choosing a permutation matrix $\mathcal{P}\in\mathbb{R}^{N\times N}$, a real orthogonal matrix $\mathcal{Q}\in\mathbb{R}^{(N-1)\times (N-1)}$, a complex number $\bar{z}\in \Lambda$ and several complex matrices $\tilde{Z}_{j}\in \Delta$, we can construct a pure Gaussian state that can be prepared by a system subject to the two constraints~\ref{constraint1} and~\ref{constraint2}. 
\end{rmk}

\begin{rmk}
As an extreme case, let us consider a one-mode pure Gaussian state ($N=1$). According to Theorem~\ref{theorem1}, a one-mode pure Gaussian state can be  generated by a  linear quantum system subject to the two constraints~\ref{constraint1} and~\ref{constraint2} if and only if its  Gaussian graph matrix $Z$ satisfies 
$Z=\mathcal{P}^{\top}
\bar{z}  
\mathcal{P}=\bar{z}$, where $ \bar{z}\in \Lambda $.  Since the set $\Lambda$ characterizes all of the one-mode pure Gaussian states, so we conclude that all of the one-mode pure Gaussian states can be generated by a linear quantum system subject to the two constraints~\ref{constraint1} and~\ref{constraint2}.  
\end{rmk}

\begin{rmk}
Let us consider the two-mode case ($N=2$). According to Theorem~\ref{theorem1}, a two-mode pure Gaussian state can be  generated by a  linear quantum system subject to the two constraints~\ref{constraint1} and~\ref{constraint2} if and only if its  Gaussian graph matrix $Z$ can be written as 
$Z=\mathcal{P}^{\top}\begin{bmatrix}
\bar{z} &0\\
0 &-\frac{1}{\bar{z}} \end{bmatrix}
\mathcal{P}$, where $ \bar{z}\in \Lambda $.  Since the role of the permutation matrix $\mathcal{P}$ is not important here, it follows that the Gaussian graph matrix of the two-mode state is of the form $Z=\begin{bmatrix}
\bar{z} &0\\
0 &-\frac{1}{\bar{z}} \end{bmatrix}$, where $ \bar{z}\in \Lambda $.  The structure of the system that generates this state is shown in Fig.~\ref{figrmk1}. It follows from the form of the matrix $Z$ that the two modes are not entangled~\cite{MWPY16:arxiv}. 

\newsavebox{\smlmat}% Box to store smallmatrix content
\savebox{\smlmat}{$Z=\begin{bmatrix}\bar{z} &0 \\
0 &-\frac{1}{\bar{z}} \end{bmatrix}$}
\begin{figure}[htbp]
\begin{center}
\includegraphics[height=2.2cm]{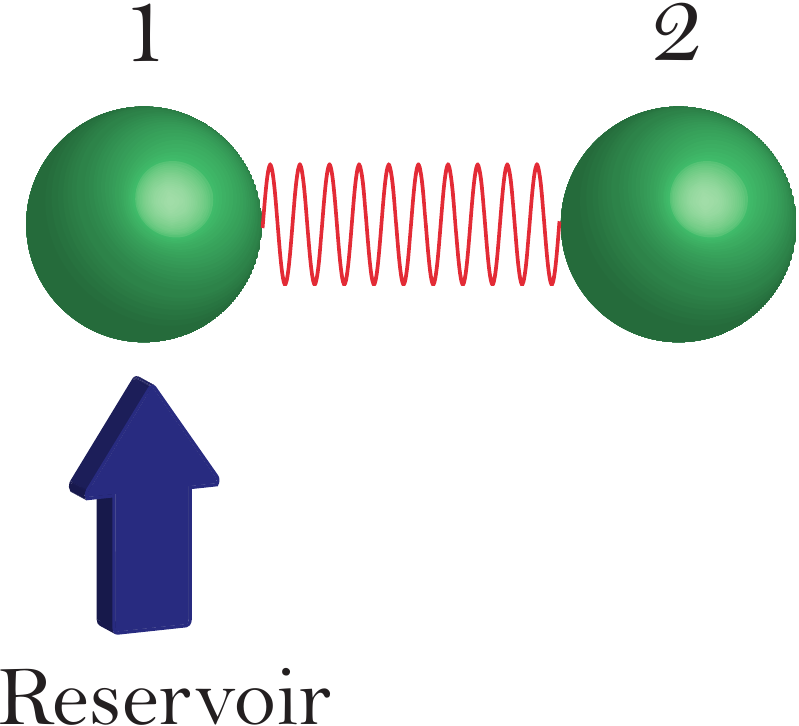}
\caption{
The structure of a quantum system that generates the two-mode pure Gaussian state with  Gaussian graph matrix~\usebox{\smlmat}.}
\label{figrmk1}
\end{center}
\end{figure} 
\end{rmk}

\begin{rmk}
Let us consider the three-mode case ($N=3$). According to Theorem~\ref{theorem1}, a three-mode pure Gaussian state can be  generated by a  linear quantum system subject to the two constraints~\ref{constraint1} and~\ref{constraint2} if and only if its  Gaussian graph matrix  
$Z$ can be written as $
Z=\mathcal{P}^{\top}\begin{bmatrix}
\bar{z}  &0_{ 1 \times 2}\\
0_{2 \times 1} & \mathcal{Q}^{\top}\tilde{Z}_{1}\mathcal{Q}
\end{bmatrix}\mathcal{P}$, 
 where   
$ \bar{z}\in \Lambda $,  and $\tilde{Z}_{1}\in \Delta$. As mentioned, the constraint defining $\Lambda$, i.e., $\im(z)>0$, comes from the definition of a Gaussian graph matrix. Similar constraints (i.e., $\left\{\mathpzc{Z}=\mathpzc{Z}^{\top}\in \mathbb{C}^{2\times 2}\;\;\text{and}\; \im\left(\mathpzc{Z}\right)>0\right\}$) apply to the set $\Delta$. Let us analyze the other two constraints in $\Delta$. One is $\big(\diag[1,-1]
\mathpzc{Z}\big)^{2} =-I_{2}$, i.e., $\mathpzc{Z}\diag[1,-1]\mathpzc{Z}=-\diag[1,-1]$, which is similar to what we have obtained in Theorem~1 of~\cite{MPW15:arxiv}. This should not be surprising, since  we can think of the mode coupled to the reservoir as an auxiliary mode. Then the other two modes are coupled to a single  reservoir. This is indeed the case considered in~\cite{MPW15:arxiv}. The other constraint defining $\Delta$, i.e., $\det(\mathpzc{Z}+ \frac{1}{\bar{z}}I_{2})=0$, is equivalent to saying that $-\frac{1}{\bar{z}}$ is an eigenvalue of the matrix $\tilde{Z}_{j}$. This constraint comes from the \emph{locality} requirement on the original system--reservoir coupling. 
\end{rmk}

\begin{lem}\label{theorem2} 
Given $ \bar{z}\in \Lambda $, then the set $\Delta$ contains only two matrices, i.e., 
\begin{align*}
\Delta =\left\{\begin{bmatrix}\frac{\bar{z}^{2}-1}{2\bar{z}} &\frac{\bar{z}^{2}+1}{2\bar{z}}\\ \frac{\bar{z}^{2}+1}{2\bar{z}} &\frac{\bar{z}^{2}-1}{2\bar{z}} \end{bmatrix},\quad\begin{bmatrix}\frac{\bar{z}^{2}-1}{2\bar{z}} &-\frac{\bar{z}^{2}+1}{2\bar{z}}\\ -\frac{\bar{z}^{2}+1}{2\bar{z}} &\frac{\bar{z}^{2}-1}{2\bar{z}} \end{bmatrix} \right\}. 
\end{align*}
\end{lem}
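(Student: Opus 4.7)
The plan is to exploit the symmetry of $\mathpzc{Z}$ together with the algebraic constraints $(J\mathpzc{Z})^2 = -I_2$ (where $J \triangleq \diag[1,-1]$) and $\det(\mathpzc{Z} + \tfrac{1}{\bar{z}}I_2) = 0$ to reduce $\Delta$ to a finite system of scalar equations. First I would parametrize a generic element as $\mathpzc{Z} = \begin{bmatrix} a & b \\ b & c \end{bmatrix}$, using $\mathpzc{Z} = \mathpzc{Z}^{\top}$. A direct expansion of $(J\mathpzc{Z})^2 = -I_2$ then yields the three scalar identities
\begin{align*}
a^2 - b^2 = -1, \qquad c^2 - b^2 = -1, \qquad b(a - c) = 0.
\end{align*}

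Next I would argue that effectively $a = c$. The first two equations give $a^2 = c^2$, and the third forces either $a = c$ directly, or else $b = 0$ together with $a = -c$ and $a^2 = -1$. In the latter degenerate branch, the positivity requirement $\im(\mathpzc{Z}) > 0$ forces $\im(a) > 0$ and $\im(-a) > 0$, a contradiction unless $a = i = c$, which anyway is the case $a = c$. So without loss $\mathpzc{Z} = \begin{bmatrix} a & b \\ b & a \end{bmatrix}$ with $(a-b)(a+b) = -1$.

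Then I would impose the remaining determinant condition, which reduces to $(a + \tfrac{1}{\bar{z}})^2 - b^2 = 0$, i.e.\ $a + \tfrac{1}{\bar{z}} = \pm b$. Each sign gives a linear relation between $a$ and $b$ that, combined with $(a-b)(a+b) = -1$, pins down $a \pm b$ uniquely: in the ``$+$'' case one finds $a - b = -\tfrac{1}{\bar{z}}$, hence $a + b = \bar{z}$, yielding $a = \tfrac{\bar{z}^2 - 1}{2\bar{z}}$ and $b = \tfrac{\bar{z}^2 + 1}{2\bar{z}}$; the ``$-$'' case produces the same $a$ but flips the sign of $b$. These are precisely the two matrices listed in the statement.

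The last step is verification: I need to confirm that both candidates actually lie in $\Delta$, the only nontrivial part being $\im(\mathpzc{Z}) > 0$, which reduces to $\im(a) > |\im(b)|$. Writing $a = \tfrac{\bar{z}}{2} - \tfrac{1}{2\bar{z}}$ and $b = \pm(\tfrac{\bar{z}}{2} + \tfrac{1}{2\bar{z}})$ and computing $\im(a) \pm \im(b)$, each of these is a positive real multiple of $\im(\bar{z}) > 0$, so positivity holds for both candidates. The main subtlety — and essentially the only one — is the ruling out of the spurious branch $b=0$, $a = -c$ via the positivity of $\im(\mathpzc{Z})$; everything else is routine algebra in the two unknowns $a \pm b$. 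I would also note in passing that at $\bar{z} = i$ the two candidates coincide (both equal $iI_2$), so ``two matrices'' should be read as ``at most two''.
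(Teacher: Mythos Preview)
Your proof is correct and follows essentially the same route as the paper's: reduce to the bisymmetric form $\mathpzc{Z} = \begin{bmatrix} a & b \\ b & a \end{bmatrix}$, solve the scalar system $a^2 - b^2 = -1$ together with $(a + 1/\bar{z})^2 = b^2$, and then verify $\im(\mathpzc{Z}) > 0$ by computing $\im(a)\pm\im(b)$. The only difference is that the paper invokes an external result (Theorem~2 of \cite{MPW15:arxiv}) to obtain equal diagonal entries at the outset, whereas you derive $a=c$ directly from the three scalar identities and rule out the spurious $b=0$, $a=-c$ branch via positivity, making your argument self-contained.
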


The proof of Lemma~\ref{theorem2} is provided in the Appendix. Using Lemma~\ref{theorem2}, we have an equivalent version of Theorem~\ref{theorem1}.  

\begin{thm} \label{theorem3}
$\;$
\begin{itemize}
\item An $N$-mode pure Gaussian state (where $N$ is odd) can be  generated by a  linear quantum system subject to the two constraints~\ref{constraint1} and~\ref{constraint2} if and only if its  Gaussian graph matrix $Z$ can be written as 
\begin{align}
Z&=\mathcal{P}^{\top}\begin{bmatrix}
\bar{z}  &0_{ 1 \times (N-1)}\\
0_{(N-1) \times 1} & \mathcal{Q}^{\top}\bar{Z}\mathcal{Q}
\end{bmatrix}\mathcal{P}, \label{thm31}\\
\bar{Z}&= \diag[\tilde{Z}_{1},\cdots,\tilde{Z}_{ \frac{N-1}{2} }],  \notag
\end{align}
 where $\mathcal{P}\in \mathbb{R}^{N \times N}$ is a permutation matrix, $\mathcal{Q}\in \mathbb{R}^{(N-1) \times (N-1)}$ is a real orthogonal matrix, 
$ \bar{z}\in \Lambda $,  and $\tilde{Z}_{j}=\begin{bmatrix}\frac{\bar{z}^{2}-1}{2\bar{z}} &\frac{\bar{z}^{2}+1}{2\bar{z}}\\ \frac{\bar{z}^{2}+1}{2\bar{z}} &\frac{\bar{z}^{2}-1}{2\bar{z}} \end{bmatrix}\;\text{or}\;\begin{bmatrix}\frac{\bar{z}^{2}-1}{2\bar{z}} &-\frac{\bar{z}^{2}+1}{2\bar{z}}\\ -\frac{\bar{z}^{2}+1}{2\bar{z}} &\frac{\bar{z}^{2}-1}{2\bar{z}} \end{bmatrix} $, $1\le j \le \frac{N-1}{2}$. 
\item An $N$-mode pure Gaussian state (where $N$ is even) can be  generated by a  linear quantum system subject to the two constraints~\ref{constraint1} and~\ref{constraint2} if and only if its  Gaussian graph matrix $Z$ can be written as 
\begin{align}
Z&=\mathcal{P}^{\top}\begin{bmatrix}
\bar{z}  &0_{ 1 \times (N-1)}\\
0_{(N-1) \times 1} & \mathcal{Q}^{\top}\bar{Z}\mathcal{Q}
\end{bmatrix}\mathcal{P}, \label{thm32}\\
 \bar{Z}&= \diag[\tilde{Z}_{1},\cdots,\tilde{Z}_{ \frac{N}{2} }], \notag
\end{align} 
 where $\mathcal{P}\in \mathbb{R}^{N \times N}$ is a permutation matrix, $\mathcal{Q}\in \mathbb{R}^{(N-1) \times (N-1)}$ is a real orthogonal matrix, 
$ \bar{z}\in \Lambda $, $\tilde{Z}_{1}=-\frac{1}{\bar{z}}$,  and $\tilde{Z}_{j}=\begin{bmatrix}\frac{\bar{z}^{2}-1}{2\bar{z}} &\frac{\bar{z}^{2}+1}{2\bar{z}}\\ \frac{\bar{z}^{2}+1}{2\bar{z}} &\frac{\bar{z}^{2}-1}{2\bar{z}} \end{bmatrix}\;\text{or}\;\begin{bmatrix}\frac{\bar{z}^{2}-1}{2\bar{z}} &-\frac{\bar{z}^{2}+1}{2\bar{z}}\\ -\frac{\bar{z}^{2}+1}{2\bar{z}} &\frac{\bar{z}^{2}-1}{2\bar{z}} \end{bmatrix} $, $2\le j \le \frac{N}{2}$. 
\end{itemize}
\end{thm}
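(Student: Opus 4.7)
The statement of Theorem~\ref{theorem3} is identical to that of Theorem~\ref{theorem1} except that the abstract membership condition $\tilde{Z}_{j}\in\Delta$ has been replaced by the explicit two-element description of $\Delta$ supplied by Lemma~\ref{theorem2}. My plan is therefore to deduce Theorem~\ref{theorem3} as an immediate corollary of those two prior results, without any additional analytic work on either the system-theoretic side (Lindblad realizability, rank/controllability conditions, the structure of the Hamiltonian and coupling operators) or the algebraic side (the defining constraints of $\Delta$).

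First I would invoke Theorem~\ref{theorem1} in its original form. For odd $N$ this produces the block decomposition~\eqref{thm11} with $\bar{z}\in\Lambda$ and each $2\times 2$ block $\tilde{Z}_{j}$ lying in $\Delta$; for even $N$ it produces~\eqref{thm12} with the extra requirement $\tilde{Z}_{1}=-1/\bar{z}$ and $\tilde{Z}_{j}\in\Delta$ for $j\ge 2$. Next I would apply Lemma~\ref{theorem2}, which, for each fixed $\bar{z}\in\Lambda$, asserts that $\Delta$ consists of exactly the two symmetric $2\times 2$ matrices displayed in the statement. Substituting this explicit enumeration of $\Delta$ into the conditions coming from Theorem~\ref{theorem1} yields verbatim the conditions appearing in~\eqref{thm31} and~\eqref{thm32}, with each $\tilde{Z}_{j}$ allowed, independently of the others, to be either of the two listed matrices. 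Both directions (necessity and sufficiency) transfer with no modification, because Lemma~\ref{theorem2} is an equality of sets, not merely an inclusion.

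Since both Theorem~\ref{theorem1} and Lemma~\ref{theorem2} are established in the Appendix, this deduction is purely mechanical; there is no genuine obstacle. The only detail worth flagging is that the scalar entry $\tilde{Z}_{1}=-1/\bar{z}$ in the even case is inherited directly from Theorem~\ref{theorem1} and is unaffected by Lemma~\ref{theorem2}, which governs only the $2\times 2$ blocks indexed by $j$ with $j\ge 2$ (respectively $j\ge 1$ in the odd case). Likewise, the permutation matrix $\mathcal{P}$, the orthogonal matrix $\mathcal{Q}$, and the scalar $\bar{z}\in\Lambda$ play identical roles in both theorems, so no reinterpretation is needed.
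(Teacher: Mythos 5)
Your proposal matches the paper exactly: the paper derives Theorem~\ref{theorem3} in one line by substituting the explicit two-element enumeration of $\Delta$ from Lemma~\ref{theorem2} into Theorem~\ref{theorem1}, with all the heavy lifting done in the proofs of those two results. Your handling of the scalar block $\tilde{Z}_{1}=-1/\bar{z}$ in the even case and the independence of the choice for each $2\times 2$ block is also consistent with the paper.
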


\begin{rmk}
Once we obtain a pure Gaussian state using Theorem~\ref{theorem3}, we can immediately construct a linear quantum system that generates such a state and also satisfies the constraints~\ref{constraint1} and~\ref{constraint2}. The construction method is given  as follows. 

\textbf{Case 1} ($N$ is odd). If $\tilde{Z}_{j}=\begin{bmatrix}\frac{\bar{z}^{2}-1}{2\bar{z}} &\frac{\bar{z}^{2}+1}{2\bar{z}}\\ \frac{\bar{z}^{2}+1}{2\bar{z}} &\frac{\bar{z}^{2}-1}{2\bar{z}} \end{bmatrix}$, choose $\tilde{R}_{21,j}=\begin{bmatrix} \bar{\tau}_{j} \\ -\bar{\tau}_{j}\end{bmatrix}$, where $\bar{\tau}_{j}\in \mathbb{R}$ and $\bar{\tau}_{j} \ne 0$.  Otherwise, choose $\tilde{R}_{21,j}=\begin{bmatrix} \bar{\tau}_{j} \\ \bar{\tau}_{j}\end{bmatrix}$, where $\bar{\tau}_{j}\in \mathbb{R}$ and $\bar{\tau}_{j} \ne 0$.  Let $
 \bar{R}_{21}=\begin{bmatrix}\tilde{R}_{21,1}^{\top} \;\cdots\;\tilde{R}_{21,  \frac{(N-1)}{2}}^{\top}\end{bmatrix}^{\top}$.  Let $
 \bar{R}_{22}=\diag\left[
r_{1},\;\; 
-r_{1},\;\; 
r_{2},\; \;
-r_{2},\;\; 
\cdots ,\;\; 
r_{\frac{N-1}{2}},\; \;
-r_{\frac{N-1}{2}}\right]$, where $r_{j}\in \mathbb{R}$, $r_{j} \ne 0$ and $|r_{j}|\ne |r_{k}|$  whenever $ j\ne k$.  In Lemma~\ref{lem0}, let us choose $R=\mathcal{P}^{\top}\begin{bmatrix}
0  &\bar{R}_{21}^{\top}\mathcal{Q}\\
\mathcal{Q}^{\top}\bar{R}_{21} & \mathcal{Q}^{\top}\bar{R}_{22}\mathcal{Q}
\end{bmatrix}\mathcal{P}$, $\Gamma=XRY$ and $P=\mathcal{P}^{\top}\begin{bmatrix}\tau_{p} &0_{1\times (N-1)}\end{bmatrix}^{\top}$, where $\tau_{p}\in \mathbb{C}$ and $\tau_{p}\ne 0$.
Then calculate the matrices $G$ and $C$ using~\eqref{G} and~\eqref{C}, respectively. The resulting linear quantum system with Hamiltonian $\hat{H}=\frac{1}{2}\hat{x}^{\top}G\hat{x}$ and coupling vector $\hat{L}=C\hat{x}$ is strictly stable and   generates the pure Gaussian state. It can be shown that this quantum system also satisfies the two constraints~\ref{constraint1} and~\ref{constraint2}. 

\textbf{Case 2} ($N$ is even). Choose $\tilde{R}_{21,1}=\bar{\tau}_{1} $, where $\bar{\tau}_{1}\in \mathbb{R}$ and $\bar{\tau}_{1} \ne 0$.  If $\tilde{Z}_{j}=\begin{bmatrix}\frac{\bar{z}^{2}-1}{2\bar{z}} &\frac{\bar{z}^{2}+1}{2\bar{z}}\\ \frac{\bar{z}^{2}+1}{2\bar{z}} &\frac{\bar{z}^{2}-1}{2\bar{z}} \end{bmatrix}$, $j\ge 2$, choose $\tilde{R}_{21,j}=\begin{bmatrix} \bar{\tau}_{j} \\ -\bar{\tau}_{j}\end{bmatrix}$, where $\bar{\tau}_{j}\in \mathbb{R}$ and $\bar{\tau}_{j} \ne 0$.  Otherwise, choose $\tilde{R}_{21,j}=\begin{bmatrix} \bar{\tau}_{j} \\ \bar{\tau}_{j}\end{bmatrix}$, where $\bar{\tau}_{j}\in \mathbb{R}$ and $\bar{\tau}_{j} \ne 0$.  Let $
 \bar{R}_{21}=\begin{bmatrix}\tilde{R}_{21,1}^{\top} \;\cdots\;\tilde{R}_{21,  \frac{N}{2}}^{\top}\end{bmatrix}^{\top}$.  Let $
 \bar{R}_{22}=\diag\left[0,\;\;
r_{2},\; \;
-r_{2},\;\; 
\cdots ,\;\; 
r_{\frac{N}{2}},\; \;
-r_{\frac{N}{2}}\right]$, where $r_{j}\in \mathbb{R}$, $r_{j} \ne 0$, $j\ge 2$,  and $|r_{j}|\ne |r_{k}|$  whenever $ j\ne k$.  In Lemma~\ref{lem0}, let us choose $R=\mathcal{P}^{\top}\begin{bmatrix}
0  &\bar{R}_{21}^{\top}\mathcal{Q}\\
\mathcal{Q}^{\top}\bar{R}_{21} & \mathcal{Q}^{\top}\bar{R}_{22}\mathcal{Q}
\end{bmatrix}\mathcal{P}$, $\Gamma=XRY$ and $P=\mathcal{P}^{\top}\begin{bmatrix}\tau_{p} &0_{1\times (N-1)}\end{bmatrix}^{\top}$, where $\tau_{p}\in \mathbb{C}$ and $\tau_{p}\ne 0$.
Then calculate the matrices $G$ and $C$ using~\eqref{G} and~\eqref{C}, respectively. The resulting linear quantum system with Hamiltonian $\hat{H}=\frac{1}{2}\hat{x}^{\top}G\hat{x}$ and coupling vector $\hat{L}=C\hat{x}$ is strictly stable and   generates the pure Gaussian state. It can be shown that  this system also satisfies the two constraints~\ref{constraint1} and~\ref{constraint2}. 
\end{rmk}

\section{Example}
We consider the five-mode pure Gaussian state generated in~\cite{ZLV15:pra}.  The Gaussian graph matrix of this pure Gaussian state is given by 
\scriptsize
\begin{align}
&Z=   \notag \\
&          i\begin{bmatrix}
\cosh(2\alpha)         &0  &0       &0         & \sinh(2\alpha)                  \\
 0               &\cosh(2\alpha) &0         & -  \sinh(2\alpha)     & 0                   \\
  0              &0             &\cosh(2\alpha)+\sinh(2\alpha) & 0             & 0                 \\
   0            & -  \sinh(2\alpha) & 0             &\cosh(2\alpha)        & 0       \\
 \sinh(2\alpha)   &0 & 0     &0       &\cosh(2\alpha)
\end{bmatrix}, \label{example1}
\end{align} 
\normalsize
where $\alpha\in \mathbb{R}$. Let us choose a permutation matrix 
$
\mathcal{P}=\begin{bmatrix}
0 &0 &1 &0 &0 \\
1 &0 &0 &0 &0\\
0 &0 &0 &0 &1 \\
0 &1 &0 &0 &0 \\
0 &0 &0 &1 &0 
\end{bmatrix}
$ and a real orthogonal matrix $\mathcal{Q}=\frac{\sqrt{2}}{2}\begin{bmatrix}
-1 &0 &-1 &0 \\
0 &-1 &0 &1  \\
0 &1 &0 &1  \\
-1 &0 &1 &0  
\end{bmatrix}$. Then we have 
$
Z= \mathcal{P}^{\top}\begin{bmatrix}
\bar{z}  &0_{ 1 \times 4}\\
0_{4 \times 1} & \mathcal{Q}^{\top}\bar{Z}\mathcal{Q}
\end{bmatrix}\mathcal{P}$, $\bar{Z}= \diag[\tilde{Z}_{1},\tilde{Z}_{2}]$, 
where $ \bar{z}=i \left(\cosh(2\alpha)+\sinh(2\alpha)\right)$,  and $\tilde{Z}_{j}=i\begin{bmatrix}
\cosh(2\alpha) & (-1)^{j-1} \sinh(2\alpha)\\
(-1)^{j-1}\sinh(2\alpha)        &\cosh(2\alpha)
\end{bmatrix}$,  $j=1,\; 2$. It can be verified that $\bar{z}\in \Lambda$ and $\tilde{Z}_{j}\in \Delta$, $j=1,\; 2$.  According to Theorem~\ref{theorem1}, this pure Gaussian state can be generated by a linear quantum system satisfying the two constraints~\ref{constraint1} and~\ref{constraint2}. Next, we construct such a system. Let $\bar{R}_{21}= \sqrt{2}\begin{bmatrix}
-1 & 1   &1 &1
\end{bmatrix}^{\top}$ and $\bar{R}_{22}=\diag[
1, \;\;-1, \;\;3, \;\;-3] $. Then in Lemma~\ref{lem0}, let us choose $
R =\mathcal{P}^{\top}\begin{bmatrix}
0 &\bar{R}_{21}^{\top}\mathcal{Q}\\
\mathcal{Q}^{\top}\bar{R}_{21} &\mathcal{Q}^{\top}\bar{R}_{22}\mathcal{Q}
\end{bmatrix}\mathcal{P} 
 =\begin{bmatrix}
   -1   & 2        & 0       &  0     &    0\\
   2    &-1    &2        &  0      &   0\\
         0    &2         & 0   & 2    &     0\\
         0         &0    &2    & 1   & 2 \\
         0         &0         &0   & 2    &1 
\end{bmatrix}$, $\Gamma =0_{5\times 5}$, and $
P=i\frac{\cosh(\alpha)-\sinh(\alpha)}{\sqrt{2}}\begin{bmatrix}0 &0 &1 &0 &0 \end{bmatrix}^{\top}$. 
It can be verified that $YRY=R$ and 
\begin{align*}
&\rank\left([P\;\;\;QP\;\;\;Q^{2}P\;\;\;Q^{3}P\;\;\;Q^{4}P]\right)\\
=& \rank\left([P\;\;\;-iRYP\;\;\;-R^{2}P\;\;\;iR^{3}YP\;\;\;R^{4}P]\right)\\
=&5.
\end{align*}
Therefore, the resulting linear quantum system is strictly stable and  generates the target pure Gaussian state given in~\eqref{example1}. Substituting $R$, $\Gamma$ and $P$ into~\eqref{G} and~\eqref{C}, we obtain the Hamiltonian of the system
\begin{align*}
\hat{H}&= -\frac{1}{2}\left(\hat{q}_{1}^{2}+\hat{p}_{1}^{2} +\hat{q}_{2}^{2}+\hat{p}_{2}^{2} \right)+ \frac{1}{2}\left(\hat{q}_{4}^{2}+\hat{p}_{4}^{2} +\hat{q}_{5}^{2}+\hat{p}_{5}^{2} \right) \\
&+2\left(\hat{q}_{1}\hat{q}_{2}+\hat{p}_{1} \hat{p}_{2} + \hat{q}_{2}\hat{q}_{3}+\hat{p}_{2}\hat{p}_{3}\right)\\
& + 2\left(\hat{q}_{3}\hat{q}_{4}+\hat{p}_{3}\hat{p}_{4}  +\hat{q}_{4}\hat{q}_{5}+\hat{p}_{4}\hat{p}_{5} \right), 
\end{align*} 
and the coupling vector 
\begin{align*}
\hat{L}&= i\frac{\cosh(\alpha)-\sinh(\alpha)}{\sqrt{2}}\left[-i(\left(\cosh(2\alpha)+\sinh(2\alpha)\right)\hat{q}_{3}+\hat{p}_{3}\right]\\
&= \frac{\cosh(\alpha)+\sinh(\alpha)}{\sqrt{2}} \hat{q}_{3}+i\frac{\cosh(\alpha)-\sinh(\alpha)}{\sqrt{2}} \hat{p}_{3}  \\
&=\cosh(\alpha)\hat{a}_{3}+\sinh(\alpha)\hat{a}_{3}^{\ast}.  
\end{align*} 
The coupling operator $\hat{L}$ represents a standard dissipative reservoir that acts only on the third mode. The eigenstate corresponding to the zero eigenvalue of $\hat{L}$ is a squeezed state. The third system mode is then prepared in a squeezed state, while other modes are coupled via passive interactions. This leads to entanglement across the system, in a similar way to which the interference of a squeezed optical mode with a vacuum at a beam splitter results in entangled output modes~\cite{KSBK02:pra}.  
The structure of the system is shown in Fig.~\ref{fig4}. It is a chain of  quantum harmonic  oscillators with nearest--neighbour Hamiltonian interactions. Only the central oscillator is coupled to the reservoir. 
\begin{figure}[htbp]
\begin{center}
\includegraphics[height=2cm]{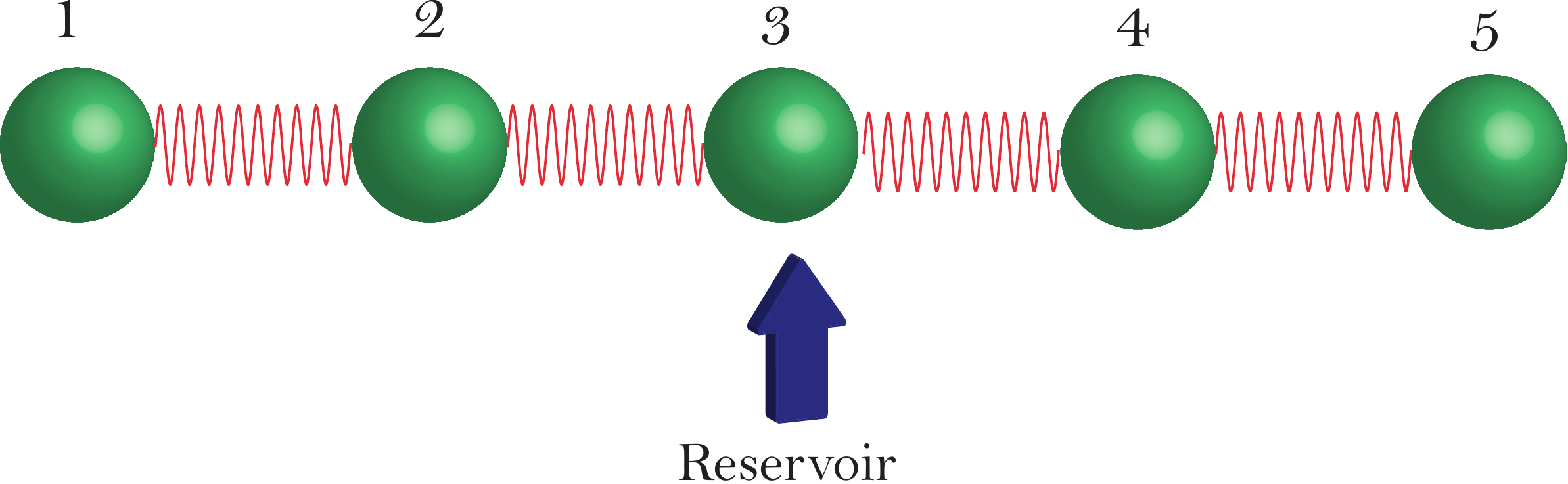}
\caption{
The  pure Gaussian state given by~\eqref{example1} can be generated in a chain of linearly coupled  quantum harmonic  oscillators with nearest--neighbour Hamiltonian interactions. Only the central oscillator is coupled to the reservoir. }
\label{fig4}
\end{center}
\end{figure}

The oscillators of the system are entangled in pairs. The first oscillator is entangled with the fifth oscillator and the second oscillator is entangled with the fourth oscillator.  The central  oscillator is not entangled with the other oscillators. The amount of entanglement can be quantified using the logarithmic negativity $\mathcal{E}$~\cite{VW02:pra,P05:prl,ASI04:pra}. The values are given by $\mathcal{E}_{(1,5)}=\mathcal{E}_{(2,4)} =2|\alpha|$. For a more detailed discussion of this example, we refer the reader to~\cite{ZLV15:pra}. 

\section{Conclusion} 
In this paper, we consider linear quantum systems subject to constraints. First, we assume that the Hamiltonian $\hat{H}$ is of the form 
$\hat{H}=\sum\limits_{j=1}^{N}\sum\limits_{k=j}^{N} g_{jk}\left(\hat{q}_{j}\hat{q}_{k}+\hat{p}_{j}\hat{p}_{k} \right)$, where $g_{jk}\in \mathbb{R}$, $ 1\le j\le k\le N$.  Second, we assume that the system is locally  coupled to a single reservoir. Then we give a full mathematical parametrization of the pure Gaussian states that can be prepared using this type of quantum system.

\section{Appendix}
In this section, we provide the proof of Theorem~\ref{theorem1}. First, we provide some preliminary results which will be used in the proof of Theorem~\ref{theorem1}.
\begin{lem}[\cite{MWPY16:arxiv}]\label{lem1}
Suppose that an $N$-mode pure Gaussian state is generated in a linear quantum system with a single dissipative process and that the corresponding Lindblad operator  acts only on the $\ell$th mode of the system, then 
\begin{align*}
 Z_{(\ell,j)}=Z_{(j,\ell)}=0,\quad \forall j\ne \ell, 
\end{align*}
where $Z_{(\ell,j)}$ denotes the $(\ell,j)$ element of the Gaussian graph matrix $Z$. In addition, the $\ell$th mode is not entangled with the other modes. 
\end{lem}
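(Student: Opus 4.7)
The plan is to extract the consequences of the locality assumption on $\hat L$ directly through the parametrization of $C$ in Lemma~\ref{lem0}, and then read off the block structure this forces on the Gaussian graph matrix. Since a single dissipative process is present, $K=1$ and the free matrix $P\in\mathbb{C}^{N\times 1}$ is a column vector; write $P=[p_{1},\ldots,p_{N}]^{\top}$. Equation~\eqref{C} then yields
\begin{align*}
\hat L \;=\; P^{\top}\begin{bmatrix}-Z & I_{N}\end{bmatrix}\hat x
\;=\; -\sum_{j=1}^{N}\Bigl(\sum_{k=1}^{N} p_{k}Z_{kj}\Bigr)\hat q_{j} \;+\; \sum_{j=1}^{N} p_{j}\,\hat p_{j}.
\end{align*}

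The hypothesis that $\hat L$ acts only on the $\ell$th mode means every coefficient of $\hat q_{j}$ and of $\hat p_{j}$ with $j\neq\ell$ must vanish. Matching the $\hat p_{j}$ coefficients immediately forces $p_{j}=0$ for all $j\neq\ell$, so only $p_{\ell}$ may be nonzero. Substituting this back, the $\hat q_{j}$ coefficient collapses to $-p_{\ell}Z_{\ell j}$, and requiring it to vanish for $j\neq\ell$ gives $p_{\ell}Z_{\ell j}=0$. The alternative $p_{\ell}=0$ would make $P$ the zero vector and so destroy the rank condition~\eqref{rankconstraint}; hence $p_{\ell}\neq 0$, and we conclude $Z_{\ell j}=0$ for every $j\neq\ell$. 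The complex symmetry $Z=Z^{\top}$ noted in Section~\ref{Preliminaries} then also delivers $Z_{j\ell}=0$.

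For the disentanglement claim I would permute the modes so that $\ell$ comes first, turning $Z$ into the block-diagonal form $\mathrm{diag}[Z_{\ell\ell},Z']$ with $Z'\in\mathbb{C}^{(N-1)\times(N-1)}$. Writing $X=\mathrm{diag}[\mathrm{Re}(Z_{\ell\ell}),X']$ and $Y=\mathrm{diag}[\mathrm{Im}(Z_{\ell\ell}),Y']$, each of the four $N\times N$ blocks $Y^{-1}$, $Y^{-1}X$, $XY^{-1}$ and $XY^{-1}X+Y$ appearing in~\eqref{covariance} inherits the same block-diagonal partition. Regrouping the canonical coordinates so that $(\hat q_{\ell},\hat p_{\ell})$ are listed first then exhibits $V$ as a direct sum of a $2\times 2$ single-mode covariance matrix and a $2(N-1)\times 2(N-1)$ covariance matrix of the remaining modes. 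A zero-mean pure Gaussian state whose covariance factorizes across a mode partition is the tensor product of pure Gaussian states on each factor, so mode $\ell$ is unentangled with the rest.

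The main obstacle is really only the first step: one has to see clearly that combining the one-mode support of $\hat L$ with the parametrization $C=P^{\top}[-Z\;\;I_{N}]$ and the controllability/rank condition forces an entire row and column of $Z$ to vanish. Once that is in hand, the block-diagonal inheritance from~\eqref{covariance} and the standard fact that a pure Gaussian state with a product covariance matrix is a product state dispatch the entanglement assertion without further work.
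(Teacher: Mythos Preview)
The paper does not supply its own proof of this lemma; it is quoted from \cite{MWPY16:arxiv} and used as a black box. Your argument is therefore not a comparison target but a standalone proof, and it is correct. Using Lemma~\ref{lem0} with $K=1$ to write $\hat L=P^{\top}[-Z\;\;I_{N}]\hat x$, reading off $p_{j}=0$ for $j\neq\ell$ from the $\hat p_{j}$ coefficients, invoking the rank condition~\eqref{rankconstraint} to exclude $p_{\ell}=0$, and then forcing $Z_{\ell j}=0$ from the $\hat q_{j}$ coefficients is exactly the right chain of deductions; symmetry of $Z$ closes the first claim. The disentanglement step is likewise sound: block-diagonality of $Z$ propagates through~\eqref{covariance} to $V$, and a zero-mean pure Gaussian state with block-diagonal covariance across a bipartition is a tensor product.
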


\begin{lem}[\cite{C99:book}] \label{lem3}
If a pair of matrices $\left(A_{1},\;A_{2}\right)$ is controllable, where $A_{1}\in\mathbb{C}^{n\times n}$ and $A_{2}\in\mathbb{C}^{n\times m}$, then $\left(F^{-1}A_{1}F,\;F^{-1}A_{2}\right)$ is controllable  where $F\in\mathbb{C}^{n\times n}$ is a non-singular matrix.
\end{lem}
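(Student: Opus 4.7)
My plan is a direct computation of the controllability matrix in the transformed coordinates. Write $\mathcal{C}(A_1, A_2) \triangleq [A_2\;\;A_1 A_2\;\;\cdots\;\;A_1^{n-1} A_2]$ for the controllability matrix of a pair $(A_1, A_2)$. A one-line induction on $k$ gives the identity $(F^{-1} A_1 F)^k = F^{-1} A_1^k F$ for every $k \ge 0$, since consecutive $F F^{-1}$ factors in the product telescope to the identity. Right-multiplying by $F^{-1} A_2$ then yields the block identity $(F^{-1} A_1 F)^k (F^{-1} A_2) = F^{-1} A_1^k A_2$.

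Concatenating these blocks horizontally for $k = 0, 1, \ldots, n-1$, the controllability matrix of the transformed pair is exactly $F^{-1} \mathcal{C}(A_1, A_2)$. Since $F^{-1}$ is non-singular, left-multiplication by $F^{-1}$ preserves rank, so $\rank(F^{-1} \mathcal{C}(A_1, A_2)) = \rank(\mathcal{C}(A_1, A_2)) = n$ whenever $(A_1, A_2)$ is controllable. This is precisely the controllability condition for $(F^{-1} A_1 F,\; F^{-1} A_2)$, which completes the argument.

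There is no substantive obstacle here; the result is classical in linear systems theory and the computation reduces to a telescoping identity. The lemma is presumably included as a preliminary tool for the proof of Theorem~\ref{theorem1}, where the rank condition~\eqref{rankconstraint} will need to be checked in a basis adapted to the block structure of $Z$, $R$, and $\Gamma$, and it justifies performing such a change of coordinates without loss of generality.
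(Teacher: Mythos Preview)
Your proof is correct and is the standard textbook argument. The paper does not actually supply a proof of this lemma; it merely cites it from Chen's \emph{Linear System Theory and Design}, so there is nothing to compare against beyond noting that your direct rank computation via $\mathcal{C}(F^{-1}A_1F,\,F^{-1}A_2)=F^{-1}\mathcal{C}(A_1,A_2)$ is exactly the expected route.
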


\begin{lem} \label{lem2}
Suppose $A=\begin{bmatrix}
A_{11}   &A_{12}\\
 A_{21}   &A_{22}
\end{bmatrix}$, where $A_{11}\in \mathbb{C}$, $A_{12}\in \mathbb{C}^{1  \times (n-1)}$, $A_{21}\in \mathbb{C}^{ (n-1)\times 1 }$, and $A_{22}\in \mathbb{C}^{(n-1) \times (n-1)}$, and $\xi=\begin{bmatrix}\tau &0_{1\times(n-1)}\end{bmatrix}^{\top}\in  \mathbb{C}^{n \times 1}$, where $\tau \in  \mathbb{C}$ and $\tau\ne 0$. Then the pair $(A,\; \xi)$ is controllable if and only if the pair $(A_{22},\; A_{21})$ is controllable. 
\end{lem}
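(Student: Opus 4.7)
My plan is to reduce controllability of $(A, \xi)$ to a rank condition on the bottom $(n-1) \times (n-1)$ block of its controllability matrix, and then show that that block has the same rank as the controllability matrix of $(A_{22}, A_{21})$.

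First I would note that since $\xi = \tau e_{1}$ with $e_{1}$ the first standard basis vector and $\tau \ne 0$, we have $[\xi, A\xi, \ldots, A^{n-1}\xi] = \tau\, [e_{1}, A e_{1}, \ldots, A^{n-1} e_{1}]$, so $(A, \xi)$ is controllable iff $(A, e_{1})$ is. Writing $A^{k} e_{1} = \begin{bmatrix} a_{k} \\ b_{k} \end{bmatrix}$ with $a_{k} \in \mathbb{C}$ and $b_{k} \in \mathbb{C}^{n-1}$, the block structure of $A$ yields the recursion
\[
a_{k+1} = A_{11} a_{k} + A_{12} b_{k}, \qquad b_{k+1} = a_{k} A_{21} + A_{22} b_{k},
\]
with $a_{0} = 1$ and $b_{0} = 0$. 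The controllability matrix of $(A, e_{1})$ therefore takes the block form
\[
\mathcal{C}_{A} = \begin{bmatrix} 1 & a_{1} & \cdots & a_{n-1} \\ 0 & b_{1} & \cdots & b_{n-1} \end{bmatrix}.
\]
Using the leading $1$ in the top-left corner to clear the rest of the top row by elementary column operations (which preserves rank), one sees that $\mathrm{rank}(\mathcal{C}_{A}) = n$ if and only if $B \triangleq [b_{1}, \ldots, b_{n-1}] \in \mathbb{C}^{(n-1) \times (n-1)}$ has rank $n-1$.

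The heart of the argument is then to show that $B$ and the controllability matrix $\mathcal{C}_{A_{22}} \triangleq [A_{21}, A_{22} A_{21}, \ldots, A_{22}^{n-2} A_{21}]$ of $(A_{22}, A_{21})$ have the same column span. I would prove by induction on $k$ that
\[
b_{k} = A_{22}^{k-1} A_{21} + \sum_{j=0}^{k-2} \gamma_{k,j}\, A_{22}^{j} A_{21}
\]
for some scalars $\gamma_{k,j} \in \mathbb{C}$. The base case $b_{1} = A_{21}$ is immediate from $a_{0} = 1$, and the inductive step uses $b_{k+1} = a_{k} A_{21} + A_{22} b_{k}$: applying $A_{22}$ to the leading $A_{22}^{k-1} A_{21}$ term in $b_{k}$ produces the new leading term $A_{22}^{k} A_{21}$ with coefficient $1$. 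Hence $B = \mathcal{C}_{A_{22}} T$ for some upper triangular $T \in \mathbb{C}^{(n-1)\times(n-1)}$ with unit diagonal, which is invertible, giving $\mathrm{rank}(B) = \mathrm{rank}(\mathcal{C}_{A_{22}})$.

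Combining the two reductions yields the desired equivalence. The only real care needed is in the inductive bookkeeping, ensuring that the coefficient of the leading $A_{22}^{k-1} A_{21}$ term in $b_{k}$ is always $1$ so that $T$ is invertible; everything else is a routine block-matrix calculation. The result could alternatively be phrased via the PBH rank test applied to the special single-input structure $\xi = \tau e_{1}$, but the direct column-space argument above seems cleaner.
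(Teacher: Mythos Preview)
Your argument is correct, but it takes a genuinely different route from the paper. The paper applies the PBH (Hautus) rank test: $(A,\xi)$ is controllable iff $[A-\lambda I_n\;\;\xi]$ has full row rank for every $\lambda\in\mathbb{C}$, and with the given block structure and $\tau\ne 0$ this is immediately equivalent to $[A_{21}\;\;A_{22}-\lambda I_{n-1}]$ having full row rank for every $\lambda$, i.e.\ to controllability of $(A_{22},A_{21})$. That proof is essentially two lines in each direction. Your approach instead works directly with the Kalman controllability matrix: you reduce to $\xi=e_1$, peel off the first row and column to isolate the lower $(n-1)\times(n-1)$ block $B=[b_1,\ldots,b_{n-1}]$, and then show by induction that $B=\mathcal{C}_{A_{22}}T$ with $T$ unit upper triangular. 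This is longer and requires the inductive bookkeeping you flagged, but it is more self-contained in that it uses only the definition of controllability (the rank of the Kalman matrix) rather than the PBH characterization. Amusingly, you mention the PBH route in your closing remark as an alternative you find less clean; the paper takes exactly that route, and in this particular block setting it is the shorter of the two.
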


\begin{proof}
If $(A,\; \xi)$ is controllable, then  the matrix $\left[A-\lambda I_{n}\;\;  \xi\right]$ has full row rank for all $\lambda \in  \mathbb{C}$~\cite[Theorem 3.1]{ZJK96:book}. That is,  $\rank\left(\begin{bmatrix}
A_{11}-\lambda    &A_{12} &\tau \\
 A_{21}   &A_{22}-\lambda I_{n-1}  &0_{(n-1) \times 1}
\end{bmatrix}\right)=n$ for all $\lambda \in  \mathbb{C}$. It follows that $\rank\left(\left[ A_{21}\;\; A_{22}-\lambda I_{n-1}\right]\right)=n-1$ for all $\lambda \in  \mathbb{C}$. Therefore, $(A_{22},\; A_{21})$ is controllable~\cite[Theorem 3.1]{ZJK96:book}.  Conversely, if $(A_{22},\; A_{21})$ is controllable, then $\rank\left(\left[A_{22}-\lambda I_{n-1}\;\; A_{21}\right]\right)=n-1$ for all $\lambda \in  \mathbb{C}$. Since $\tau\ne 0$, it follows that $\rank\left(\begin{bmatrix}
A_{11}-\lambda    &A_{12}  &\tau\\
 A_{21}   &A_{22}-\lambda I_{n-1}   &0_{(n-1) \times 1}
\end{bmatrix}\right)=n$ for all $\lambda \in  \mathbb{C}$. That is,  the matrix $\left[A-\lambda I_{n}\;\;  \xi\right]$ has full row rank for all $\lambda \in  \mathbb{C}$. Therefore, $(A,\; \xi)$ is controllable. 
\end{proof}

\begin{lem} \label{lem4}
Suppose $A=\diag\left[A_{11},\;A_{22},\;\cdots, \; A_{mm}\right]$, where $A_{jj}\in \mathbb{C}^{n_{j}\times n_{j}}$, $1\le j\le m$, and $\xi=\begin{bmatrix}\xi_{1}^{\top} & \xi_{2}^{\top} &\cdots & \xi_{m}^{\top} \end{bmatrix}^{\top}$, where $\xi_{j}\in  \mathbb{C}^{n_{j} \times 1}$, $1\le j\le m$.  Then the pair $(A,\; \xi)$ is controllable if and only if all the pairs $(A_{jj},\; \xi_{j})$, $1\le j\le m$, are controllable and $A_{jj}$ and $A_{kk}$, $j\ne k$, have no common eigenvalues. 
\end{lem}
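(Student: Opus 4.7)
The plan is to prove both directions via the Popov--Belevitch--Hautus (PBH) rank test, the same device used in the proof of the preceding Lemma: the pair $(A,\xi)$ is controllable if and only if $\left[A-\lambda I_{n}\;\;\xi\right]$ has full row rank for every $\lambda \in \mathbb{C}$, where $n=\sum_{j}n_{j}$. Because $A$ is block diagonal, this test matrix takes the form
\[
\left[A-\lambda I_{n}\;\;\xi\right]=\begin{bmatrix} A_{11}-\lambda I_{n_{1}} & & & \xi_{1}\\ & \ddots & & \vdots\\ & & A_{mm}-\lambda I_{n_{m}} & \xi_{m}\end{bmatrix},
\]
and all the reasoning reduces to locating left null vectors of this matrix.

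For necessity, I would assume $(A,\xi)$ is controllable and argue the two conclusions separately. If some $(A_{jj},\xi_{j})$ fails PBH at $\lambda_{0}$, pick a nonzero row vector $v_{j}^{*}$ with $v_{j}^{*}A_{jj}=\lambda_{0}v_{j}^{*}$ and $v_{j}^{*}\xi_{j}=0$; embedding $v_{j}^{*}$ into the $j$-th block slot (with zeros elsewhere) produces a left annihilator of $\left[A-\lambda_{0}I_{n}\;\;\xi\right]$, contradicting controllability of $(A,\xi)$. For the disjoint-spectrum claim, suppose $\lambda_{0}$ is an eigenvalue of both $A_{jj}$ and $A_{kk}$ with $j\neq k$; choose corresponding left eigenvectors $v_{j}^{*}$ and $v_{k}^{*}$ and embed them into the ambient row vectors $w_{j}$ and $w_{k}$. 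Both $w_{j}$ and $w_{k}$ kill $A-\lambda_{0}I_{n}$, and the single scalar equation $c_{j}v_{j}^{*}\xi_{j}+c_{k}v_{k}^{*}\xi_{k}=0$ always admits a nonzero solution $(c_{j},c_{k})$, yielding a nonzero left annihilator of $\left[A-\lambda_{0}I_{n}\;\;\xi\right]$ and again contradicting PBH.

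For sufficiency, I would assume every $(A_{jj},\xi_{j})$ is controllable and the spectra are pairwise disjoint, and suppose $w=[w_{1}^{*},\dots,w_{m}^{*}]$ is a left null vector of $\left[A-\lambda_{0}I_{n}\;\;\xi\right]$ for some $\lambda_{0}$. Then $w_{j}^{*}(A_{jj}-\lambda_{0}I_{n_{j}})=0$ for each $j$ and $\sum_{j}w_{j}^{*}\xi_{j}=0$. By disjointness of the spectra, $\lambda_{0}$ is an eigenvalue of at most one block $A_{j_{0}j_{0}}$, so $w_{k}=0$ for all $k\neq j_{0}$. The remaining equations $w_{j_{0}}^{*}(A_{j_{0}j_{0}}-\lambda_{0}I_{n_{j_{0}}})=0$ and $w_{j_{0}}^{*}\xi_{j_{0}}=0$, combined with the PBH test applied to the controllable pair $(A_{j_{0}j_{0}},\xi_{j_{0}})$, force $w_{j_{0}}=0$. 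Hence $w=0$ and $\left[A-\lambda_{0}I_{n}\;\;\xi\right]$ has full row rank at every $\lambda_{0}$, establishing controllability of $(A,\xi)$.

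The proof is essentially bookkeeping once PBH is invoked. The only mildly delicate point is the common-eigenvalue step in the necessity direction: one must verify that the two embedded left eigenvectors can always be nontrivially combined so as to annihilate the $\xi$ column as well, which reduces to solving one homogeneous scalar equation in two unknowns and therefore always succeeds, regardless of whether the individual scalars $v_{j}^{*}\xi_{j}$ and $v_{k}^{*}\xi_{k}$ vanish.
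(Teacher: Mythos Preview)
Your proof is correct and follows the same PBH-based strategy as the paper: both directions are established by analysing when the block-structured matrix $\left[A-\lambda I\;\;\xi\right]$ has full row rank. The sufficiency arguments are essentially identical (the paper phrases it in terms of the blocks $A_{kk}-\lambda I$ being nonsingular for $k\neq j_{0}$, you phrase it via left null vectors, but it is the same reasoning).

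The one genuine difference is in the second half of necessity. The paper observes that a shared eigenvalue makes $\diag[A_{jj},A_{kk}]$ a \emph{derogatory} matrix and then invokes an external result (Lemma~6 of \cite{MPW15:arxiv}) stating that no single-column input can make a pair with a derogatory state matrix controllable. Your argument is more elementary and self-contained: you simply take left eigenvectors for the two blocks, embed them, and note that one homogeneous scalar equation in two unknowns always has a nontrivial solution, while the resulting combination is automatically nonzero because the two embedded vectors have disjoint block support. Your route avoids the detour through the derogatory-matrix concept and the external citation; the paper's route has the advantage of connecting the phenomenon to a named structural property, but at the cost of an extra dependency.
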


\begin{proof}
\emph{Necessity}. If the pair $(A,\; \xi)$ is controllable, then the matrix $\left[A-\lambda I_{\sum_{j=1}^{m} n_{j}}\;\;\;  \xi\right]$ has full row rank for all $\lambda \in  \mathbb{C}$~\cite[Theorem 3.1]{ZJK96:book}. That is, 
\small
\begin{align}
&\rank\left(\begin{bmatrix}
A_{11}-\lambda I_{n_{1}}   & & & &\xi_{1}\\
    &A_{22}-\lambda I_{n_{2}} &&  &\xi_{2}\\
    & &\ddots  & &\vdots\\
  & &  &A_{mm}-\lambda I_{n_{m}}  &\xi_{m}  
\end{bmatrix}\right) \notag\\
=&\sum_{j=1}^{m} n_{j}. \label{lem4proof}
\end{align}
\normalsize
Then it follows that the matrix $\left[A_{jj}-\lambda I_{n_{j}}\;\;  \xi_{j}\right]$ has full row rank for all $\lambda \in  \mathbb{C}$. Therefore, all the pairs $(A_{jj},\; \xi_{j})$, $1\le j\le m$, are controllable~\cite[Theorem 3.1]{ZJK96:book}. 

To prove the second part of necessity, without loss of generality, we assume that  $A_{11}$ and $A_{22}$ share a common eigenvalue $\lambda $. Then the matrix $\begin{bmatrix}
A_{11} &0 \\
 0   &A_{22} 
\end{bmatrix} $ is  a derogatory matrix~\cite{HJ12:book,B09:book}. Using Lemma~6 in~\cite{MPW15:arxiv}, the pair $\left(\begin{bmatrix}
A_{11} &0 \\
0    &A_{22} 
\end{bmatrix},  \begin{bmatrix}
\xi_{1}\\
\xi_{2}
\end{bmatrix} \right)$ cannot be controllable. But we already know from the previous result that if the pair $(A,\; \xi)$ is controllable then the pair $\left(\begin{bmatrix}
A_{11} &0 \\
 0   &A_{22} 
\end{bmatrix},  \begin{bmatrix}
\xi_{1}\\
\xi_{2}
\end{bmatrix} \right)$ must be controllable. Therefore, we
reach a contradiction. We conclude that  $A_{jj}$ and $A_{kk}$, $j\ne k$, have no common eigenvalues. 

\emph{Sufficiency}. To show $(A,\; \xi)$ is controllable, we need to prove that the matrix $\left[A-\lambda I_{\sum_{j=1}^{m} n_{j}}\;\;  \xi\right]$ has full row rank for all $\lambda \in  \mathbb{C}$. That is, we need to prove that the matrix $\left[A-\lambda I_{\sum_{j=1}^{m} n_{j}}\;\;  \xi\right]$ has full row rank for any eigenvalue $\lambda$ of $A$. Now suppose $\lambda$ is an arbitrary eigenvalue of $A$. Since $A_{jj}$ and $A_{kk}$,  $j\ne k$, have no common eigenvalues, it follows that $\lambda$ is an eigenvalue of only one block. Without loss of generality, we assume that $\lambda$ is an eigenvalue of $A_{11}$. Then we have  $\det(A_{11}-\lambda I_{n_{1}})=0$ and $\det(A_{jj}-\lambda I_{n_{j}})\ne 0$, $2\le j\le m$. Since  $(A_{11},\; \xi_{1})$ is controllable, it follows that $\left[A_{11}-\lambda I_{n_{1}}\;\;  \xi_{1}\right]$ has full row rank. Then we have 
$ \begin{bmatrix}
A_{11}-\lambda I_{n_{1}}   & & & &\xi_{1}\\
    &A_{22}-\lambda I_{n_{2}} &&  &0\\
    & &\ddots  & &\vdots\\
  & &  &A_{mm}-\lambda I_{n_{m}}  &0  
\end{bmatrix} $ has full row rank. As a consequence, it can be shown that $ \begin{bmatrix}
A_{11}-\lambda I_{n_{1}}   & & & &\xi_{1}\\
    &A_{22}-\lambda I_{n_{2}} &&  &\xi_{2}\\
    & &\ddots  & &\vdots\\
  & &  &A_{mm}-\lambda I_{n_{m}}  &\xi_{m}  
\end{bmatrix} $ has full row rank. Therefore, we conclude that  the matrix $\left[A-\lambda I_{\sum_{j=1}^{m} n_{j}}\;\;  \xi\right]$ has full row rank for any eigenvalue $\lambda$ of $A$. Hence $(A,\; \xi)$ is controllable.  
\end{proof}

\subsection*{\textbf{Proof of Theorem~\ref{theorem1}}}
\begin{proof}
\emph{Necessity}. Suppose an $N$-mode pure Gaussian state is  generated by a  linear quantum system subject to the two constraints~\ref{constraint1} and~\ref{constraint2}, and $Z$ is the corresponding Gaussian graph matrix for this pure Gaussian state. We will show that the Gaussian graph matrix $Z$ of this pure Gaussian state can be written in the form of Equation~\eqref{thm11} or Equation~\eqref{thm12}.   According to~\ref{constraint1}, the Hamiltonian of the linear quantum system is
$
\hat{H}=\frac{1}{2}\hat{x}^{\top}\begin{bmatrix}
R &0_{N\times N}\\
0_{N\times N}  &R
\end{bmatrix}\hat{x}$, 
where $R=\begin{bmatrix}
2g_{11} &g_{12} &\cdots &g_{1N}\\
g_{12}  &2g_{22} &\cdots &g_{2N}\\
\vdots  & \vdots &\ddots    &\vdots\\
g_{1N}  &g_{2N} &\cdots &2g_{NN}
\end{bmatrix}$. Using Lemma~\ref{lem0}, we have
\begin{numcases}{}
XRX +YRY-\Gamma Y^{-1}X-XY^{-1}\Gamma^{\top}  = R,\label{eq11}\\
-X  R +\Gamma Y^{-1}=0. \label{eq12}
\end{numcases}
It follows from~\eqref{eq12} that $\Gamma=XRY$. Substituting this into~\eqref{eq11} gives 
\begin{align}
YRY-XRX=R. \label{eq13}
\end{align} Recall from Lemma~\ref{lem0} that $\Gamma + \Gamma^{\top}=0$, i.e., $XRY+YRX=0$. Combining this with~\eqref{eq13} gives
\begin{align}
ZRZ=-R. \label{eq14}
\end{align} 
Suppose that the $\ell$th mode of the linear quantum system is locally coupled to the reservoir. Then using Lemma~\ref{lem1}, we have
$ Z_{(\ell,j)}= Z_{(j,\ell)}=0$,  $\forall j\ne \ell$. This fact implies that  there exists a permutation matrix $\mathcal{P}_{1}\in \mathbb{R}^{N \times N}$ such that 
\begin{align*}
Z=\mathcal{P}_{1}^{\top}\begin{bmatrix}
 \bar{z} &0_{1\times (N-1)}\\
0_{(N-1)\times 1} &\breve{Z} 
\end{bmatrix}\mathcal{P}_{1},
\end{align*}
where $\bar{z}=Z_{(\ell,\ell)}$ and $\breve{Z} \in \mathbb{C}^{(N-1)\times (N-1)}$. Obviously, $\bar{z}\in \Lambda$.  Let $\breve{R} \triangleq \mathcal{P}_{1} R \mathcal{P}_{1}^{\top}$. Then Equation~\eqref{eq14} is transformed into
\begin{align}
\begin{bmatrix}
 \bar{z} &0_{1\times (N-1)}\\
0_{(N-1)\times 1} &\breve{Z} 
\end{bmatrix}\breve{R}\begin{bmatrix}
 \bar{z} &0_{1\times (N-1)}\\
0_{(N-1)\times 1} &\breve{Z} 
\end{bmatrix}=-\breve{R}. \label{eq15}
\end{align}
If we write $\breve{R}$ in block form as $\breve{R}=\begin{bmatrix}
\breve{R}_{11} &\breve{R}_{21}^{\top}\\
\breve{R}_{21} &\breve{R}_{22}
\end{bmatrix}$, where $\breve{R}_{11}\in \mathbb{R}$, $\breve{R}_{22}=\breve{R}_{22}^{\top}\in \mathbb{R}^{(N-1)\times (N-1)}$,  and $\breve{R}_{21}\in \mathbb{R}^{(N-1)\times 1}$, then Equation~\eqref{eq15} becomes 
\begin{numcases}{}
\;\; \bar{z} \breve{R}_{11} \bar{z}  = -  \breve{R}_{11}, \notag\\
\breve{Z}\breve{R}_{22}\breve{Z}   =- \breve{R}_{22},\label{eq16}\\
\;\breve{Z}\breve{R}_{21}  \bar{z}  = - \breve{R}_{21}.\label{eq17}
\end{numcases}
Since $\breve{R}_{22}=\breve{R}_{22}^{\top}$, it can be diagonalized by a real orthogonal matrix $\mathcal{Q}_{1}\in \mathbb{R}^{(N-1)\times (N-1)}$. That is, $\breve{R}_{22}= \mathcal{Q}_{1}^{\top} \ohill{R}_{22} \mathcal{Q}_{1}$, where $\ohill{R}_{22}$ is a real diagonal matrix. Let $\ohill{Z} \triangleq  \mathcal{Q}_{1}\breve{Z}\mathcal{Q}_{1}^{\top} $,   and $\ohill{R}_{21}  \triangleq  \mathcal{Q}_{1}\breve{R}_{21} $. Then the equations~\eqref{eq16} and~\eqref{eq17} are transformed into  
\begin{numcases}{}
\ohill{Z}\ohill{R}_{22}\ohill{Z}   =- \ohill{R}_{22},\label{eq18}\\
\;\ohill{Z}\ohill{R}_{21}\bar{z}  = - \ohill{R}_{21}.\label{eq19}
\end{numcases}

Since the $\ell$th mode of the linear quantum system is locally coupled to the reservoir, it follows from Lemma~\ref{lem0} that the matrix $P$ in~\eqref{C} must be of the form 
$P=\begin{bmatrix}
0_{1\times(\ell-1)} &\tau_{p} &0_{1\times(N-\ell)}
\end{bmatrix}^{\top}$, where $\tau_{p} \in \mathbb{C}$ and $\tau_{p}\ne 0$. 
The matrix $Q$ in~\eqref{rankconstraint} is given by
$Q=-iRY+Y^{-1}\Gamma=-iRY+Y^{-1}\left(-YRX\right)=-RZ=-\mathcal{P}_{1}^{\top}\breve{R}\begin{bmatrix}
\bar{z}&0_{1 \times (N-1)}\\
0_{(N-1)\times 1} & \breve{Z} 
\end{bmatrix}\mathcal{P}_{1}$. From~\eqref{rankconstraint}, we observe that the pair $(Q,\; P)$ is controllable. Using Lemma~\ref{lem3},  it follows that $\left(-\breve{R}\begin{bmatrix}
\bar{z} &0_{1 \times  (N-1)}\\
0_{(N-1)\times 1} &\breve{Z}
\end{bmatrix},\; \mathcal{P}_{1}P \right)$ is controllable. Note that $\mathcal{P}_{1}P=\begin{bmatrix}
\tau_{p}  &0_{1\times(N-1)}
\end{bmatrix}^{\top}$ and $-\breve{R}\begin{bmatrix}
\bar{z} &0_{1\times (N-1)}\\
0_{(N-1)\times 1} &\breve{Z}
\end{bmatrix}=-\begin{bmatrix}
\breve{R}_{11}\bar{z} &\breve{R}_{21}^{\top} \breve{Z} \\
\breve{R}_{21}\bar{z}  &\breve{R}_{22} \breve{Z} 
\end{bmatrix}$. It follows from Lemma~\ref{lem2} that 
$\left(-\breve{R}_{22}\breve{Z} ,\; -\breve{R}_{21}\bar{z} \right)$ is controllable. That is, $\left(-\mathcal{Q}_{1}^{\top} \ohill{R}_{22} \ohill{Z} \mathcal{Q}_{1} ,\; -\mathcal{Q}_{1}^{\top}\ohill{R}_{21}\bar{z} \right)$ is controllable. Again using Lemma~\ref{lem3}, it follows that $\left(-\ohill{R}_{22} \ohill{Z}  ,\; -\ohill{R}_{21}\bar{z} \right)$ is controllable. Since $-\ohill{R}_{21}\bar{z}\in \mathbb{C}^{(n-1)\times 1} $, by Lemma~6 in~\cite{MPW15:arxiv}, the matrix $-\ohill{R}_{22} \ohill{Z}$ is a non-derogatory matrix. Then following similar arguments as in the proof of Theorem~1 in~\cite{MPW15:arxiv}, we have the following preliminary result. 
\begin{itemize}
\item[*] If $N$ is odd, then 
\begin{align*}
\ohill{Z}=\mathcal{P}_{2}^{\top}\bar{Z}\mathcal{P}_{2},\quad \bar{Z}=\diag[\tilde{Z}_{1},\cdots,\tilde{Z}_{ \frac{(N-1)}{2}} ],
\end{align*}
 where $\mathcal{P}_{2}\in \mathbb{R}^{(N-1)\times(N-1)}$ is a permutation matrix, $\tilde{Z}_{1}\in\left(\Pi\cup\Xi\right)$, and $\tilde{Z}_{j}\in \Xi$, $2\le j \le  \frac{(N-1)}{2}$. 
 \item[*] If $N$ is even, then 
\begin{align*}
\ohill{Z}=\mathcal{P}_{2}^{\top}\bar{Z}\mathcal{P}_{2},\quad \bar{Z}=\diag[\tilde{Z}_{1},\cdots,\tilde{Z}_{ \frac{N}{2}} ],
\end{align*}
 where $\mathcal{P}_{2}\in \mathbb{R}^{(N-1)\times(N-1)}$ is a permutation matrix, $\tilde{Z}_{1}\in \Lambda $, and $\tilde{Z}_{j}\in \Xi$, $2\le j \le  \frac{N}{2}$. 
\end{itemize}
Here $
\Pi\triangleq\{\diag[z,\;i]\;\;\big|\;\; z\in \mathbb{C}\;\;\text{and}\;\; \im(z)>0 \}$, and $
\Xi\triangleq\bigg\{\mathpzc{Z}\;\;\bigg|\;\; \mathpzc{Z}=\mathpzc{Z}^{\top}\in \mathbb{C}^{2\times 2}, \;\;\im\left(\mathpzc{Z}\right)>0, 
\;\;\text{and}\;\; \big(\diag[1,-1]
\mathpzc{Z}\big)^{2} =-I_{2}\bigg\}$. 

Let $\bar{R}_{22}\triangleq\mathcal{P}_{2} \ohill{R}_{22}\mathcal{P}_{2} ^{\top}  $ and $\bar{R}_{21}\triangleq\mathcal{P}_{2} \ohill{R}_{21}$. Then the equations~\eqref{eq18} and~\eqref{eq19} are transformed into
 \begin{numcases}{}
\bar{Z}\bar{R}_{22}\bar{Z}   =- \bar{R}_{22},\label{eq20}\\
\;\bar{Z}\bar{R}_{21} \bar{z}  = - \bar{R}_{21}.\label{eq21}
\end{numcases}
Since $\left(-\ohill{R}_{22} \ohill{Z}  ,\; -\ohill{R}_{21}\bar{z} \right)$ is controllable, i.e., $\left(-\mathcal{P}_{2}^{\top} \bar{R}_{22} \bar{Z} \mathcal{P}_{2} ,\; -\mathcal{P}_{2}^{\top}\bar{R}_{21}\bar{z} \right)$ is controllable, it follows from Lemma~\ref{lem3} that  $\left(-\bar{R}_{22} \bar{Z}  ,\; -\bar{R}_{21}\bar{z} \right)$ is controllable. 

First, if $N$ is odd,  we partition $\bar{R}_{22}$ and $\bar{R}_{21}$ as $\bar{R}_{22}=\diag[\tilde{R}_{22,1},\cdots,\tilde{R}_{22, \frac{(N-1)}{2}}]$ and $\bar{R}_{21}=\begin{bmatrix}\tilde{R}_{21,1}^{\top} \;\cdots\;\tilde{R}_{21,  \frac{(N-1)}{2}}^{\top}\end{bmatrix}^{\top}$, where $\tilde{R}_{22,j}\in \mathbb{R}^{2\times 2}$ and $\tilde{R}_{21,j}\in \mathbb{R}^{2\times 1}$, $1\le j\le \frac{(N-1)}{2}$. Then the equations~\eqref{eq20} and~\eqref{eq21} become  
 \begin{numcases}{}
\tilde{Z}_{j}\tilde{R}_{22,j}\tilde{Z}_{j}   =- \tilde{R}_{22,j},\label{eq22}\\
\;\tilde{Z}_{j}\tilde{R}_{21,j} \bar{z}  = - \tilde{R}_{21,j},\label{eq23}
\end{numcases}
where $1\le j\le \frac{(N-1)}{2}$.  Since  $\left(-\bar{R}_{22} \bar{Z}  ,\; -\bar{R}_{21}\bar{z} \right)$ is controllable, it follows from Lemma~\ref{lem4}  that $\left(-\tilde{R}_{22,j} \tilde{Z}_{j}  ,\; -\tilde{R}_{21,j}\bar{z} \right)$, $1\le j\le \frac{(N-1)}{2}$, is controllable. Note that $\tilde{R}_{22,1}$ is a real diagonal matrix.   If $\tilde{Z}_{1}\in\Pi$ and $\tilde{Z}_{1}\ne iI_{2}$, solving~\eqref{eq22} gives $\tilde{R}_{22,1}=\begin{bmatrix}
0 &0\\
0 &\tau_{1}
\end{bmatrix}$, where $\tau_{1}\in \mathbb{R}$. Then $-\tilde{R}_{22,1} \tilde{Z}_{1}=\begin{bmatrix}
0 &0\\
0 &-\tau_{1}i
\end{bmatrix}$. Since $\left(-\tilde{R}_{22,1} \tilde{Z}_{1}  ,\; -\tilde{R}_{21,1}\bar{z} \right)$ is controllable, it follows that $\tilde{R}_{21,1}=\begin{bmatrix}
\tau_{2} &\tau_{3}
\end{bmatrix}^{\top}$, with $\tau_{2}\tau_{3}\ne 0$. Substituting this into~\eqref{eq23} yields $\bar{z}=i$ and $\tilde{Z}_{1}= iI_{2}$. This contradicts the assumption that $\tilde{Z}_{1}\ne iI_{2}$. Therefore, we know that if $\tilde{Z}_{1}\in\Pi$, then $\tilde{Z}_{1}= iI_{2}$. Note that $iI_{2} \in \Xi$. Hence $\tilde{Z}_{1}\in\Xi$.  Since $\left(-\tilde{R}_{22,j} \tilde{Z}_{j}  ,\; -\tilde{R}_{21,j}\bar{z} \right)$ is controllable, it follows that $\tilde{R}_{21,j}\ne   0_{2\times  1}$, $1\le j\le \frac{(N-1)}{2}$. Then by~\eqref{eq23}, $-\frac{1}{\bar{z}}$ is an eigenvalue of $\tilde{Z}_{j}$, i.e., $\det\left(\tilde{Z}_{j}+\frac{1}{\bar{z}}\right)=0$, $1\le j \le  \frac{(N-1)}{2}$. Combining this fact with $\tilde{Z}_{j}\in\Xi$, we conclude that $\tilde{Z}_{j}\in\Delta$, $1\le j \le  \frac{(N-1)}{2}$. Let $\mathcal{P}=\mathcal{P}_{1}$, and $\mathcal{Q}=\mathcal{P}_{2}\mathcal{Q}_{1}$  in~\eqref{thm11}. Obviously, $\mathcal{Q}$ is an orthogonal matrix and Equation~\eqref{thm11} holds. This completes the first part of the necessity proof.   
 
Second, if $N$ is even,  we partition $\bar{R}_{22}$ and $\bar{R}_{21}$ as $\bar{R}_{22}=\diag[\tilde{R}_{22,1},\cdots,\tilde{R}_{22, \frac{N}{2}}]$ and $\bar{R}_{21}=\begin{bmatrix}\tilde{R}_{21,1}^{\top}\; \cdots\;\tilde{R}_{21,  \frac{N}{2}}^{\top}\end{bmatrix}^{\top}$, where $\tilde{R}_{22,1}\in \mathbb{R}$, $\tilde{R}_{21,1}\in \mathbb{R}$,  $\tilde{R}_{22,j}\in \mathbb{R}^{2\times 2}$ and $\tilde{R}_{21,j}\in \mathbb{R}^{2\times 1}$, $2\le j\le \frac{N}{2}$. Then the equations~\eqref{eq20} and~\eqref{eq21} become
 \begin{numcases}{}
\tilde{Z}_{j}\tilde{R}_{22,j}\tilde{Z}_{j}   =- \tilde{R}_{22,j},\label{eq24}\\
\;\tilde{Z}_{j}\tilde{R}_{21,j} \bar{z}  = - \tilde{R}_{21,j},\label{eq25}
\end{numcases}
where $1\le j\le \frac{N}{2}$.  Recall that  $\left(-\bar{R}_{22} \bar{Z}  ,\; -\bar{R}_{21}\bar{z} \right)$ is controllable. Then it follows from Lemma~\ref{lem4} that  $\left(-\tilde{R}_{22,j} \tilde{Z}_{j},\; -\tilde{R}_{21,j}\bar{z} \right)$, $1\le j\le \frac{N}{2}$, is controllable. Hence we have $\tilde{R}_{21,1}\ne 0$ and $\tilde{R}_{21,j}\ne 0_{2\times 1}$, $2\le j\le \frac{N}{2}$. Then by~\eqref{eq25}, $-\frac{1}{\bar{z}}$ is an eigenvalue of $\tilde{Z}_{j}$, i.e., $\det\left(\tilde{Z}_{j}+\frac{1}{\bar{z}}\right)=0$, $1\le j \le  \frac{N}{2}$. Therefore, we obtain $ \tilde{Z}_{1}=-\frac{1}{\bar{z}}$ and $\tilde{Z}_{j}\in\Delta$, $2\le j \le  \frac{N}{2}$. Let $\mathcal{P}=\mathcal{P}_{1}$, and $\mathcal{Q}=\mathcal{P}_{2}\mathcal{Q}_{1}$  in~\eqref{thm12}. Obviously, $\mathcal{Q}$ is an orthogonal matrix and Equation~\eqref{thm12} holds. This completes the second part of the necessity proof.   
 
 \emph{Sufficiency}. Suppose the graph matrix $Z$ of an $N$-mode pure Gaussian state (where $N$ is odd) satisfies Equation~\eqref{thm11}. We now construct an $N$-mode linear quantum system subject to the two constraints~\ref{constraint1} and~\ref{constraint2}, such that the state is obtained as the unique steady state of this system. We see from Equation~\eqref{thm11} that $\tilde{Z}_{j}\in \Delta$, $1\le j\le \frac{(N-1)}{2}$. Using Theorem~2 in~\cite{MPW15:arxiv}, it can be shown that $\xi_{v1}=\begin{bmatrix}
 1 &1
 \end{bmatrix}^{\top}$ and $\xi_{v2}=\begin{bmatrix}
 1 &-1
 \end{bmatrix}^{\top}$ are two eigenvectors of $\tilde{Z}_{j}$, $1\le j\le \frac{(N-1)}{2}$. Since $\tilde{Z}_{j}\in\mathbb{C}^{2\times 2}$ and $-\frac{1}{\bar{z}}$ is an eigenvalue of $\tilde{Z}_{j}$,  we have $\tilde{Z}_{j}\xi_{v1}=-\frac{1}{\bar{z}}\xi_{v1}$ or  $\tilde{Z}_{j}\xi_{v2}=-\frac{1}{\bar{z}}\xi_{v2}$. Using this fact, we choose $\tilde{R}_{21,j}=\bar{\tau}_{j}\begin{bmatrix}
 1 &1
 \end{bmatrix}^{\top}$,  where $\bar{\tau}_{j}\in \mathbb{R}$ and $\bar{\tau}_{j}\ne 0$,  if $\tilde{Z}_{j}\begin{bmatrix}
 1 &1
 \end{bmatrix}^{\top}=-\frac{1}{\bar{z}}\begin{bmatrix}
 1 &1
 \end{bmatrix}^{\top}$ and $\tilde{R}_{21,j}=\bar{\tau}_{j}\begin{bmatrix}
 1 &-1
 \end{bmatrix}^{\top}$ otherwise. Then we have $\tilde{Z}_{j}\tilde{R}_{21,j}=-\frac{1}{\bar{z}}\tilde{R}_{21,j}$, $1\le j\le \frac{(N-1)}{2}$.  Let $
 \bar{R}_{21}=\begin{bmatrix}\tilde{R}_{21,1}^{\top} \;\cdots\;\tilde{R}_{21,  \frac{(N-1)}{2}}^{\top}\end{bmatrix}^{\top}$
 and let $
 \bar{R}_{22}=\diag\left[
r_{1},\;\; 
-r_{1},\;\; 
r_{2},\; \;
-r_{2},\;\; 
\cdots ,\;\; 
r_{\frac{N-1}{2}},\; \;
-r_{\frac{N-1}{2}}\right]$, where $r_{j}\in \mathbb{R}$, $r_{j} \ne 0$, and $|r_{j}|\ne |r_{k}|$  whenever $ j\ne k$. Then it can be verified that 
$\bar{Z}\bar{R}_{22}\bar{Z}   =- \bar{R}_{22}$, and $\bar{Z}\bar{R}_{21} \bar{z}  = - \bar{R}_{21}$. In Lemma~\ref{lem0}, let us choose $R=\mathcal{P}^{\top}\begin{bmatrix}
0  &\bar{R}_{21}^{\top}\mathcal{Q}\\
\mathcal{Q}^{\top}\bar{R}_{21} & \mathcal{Q}^{\top}\bar{R}_{22}\mathcal{Q}
\end{bmatrix}\mathcal{P}$, $\Gamma=XRY$ and $P=\mathcal{P}^{\top}\begin{bmatrix}\tau_{p} &0_{1\times (N-1)}\end{bmatrix}^{\top}$ where $\tau_{p}\in \mathbb{C}$ and $\tau_{p}\ne 0$.  Then it can be verified that $R=R^{\top}\in\mathbb{R}^{N\times N}$ and $ZRZ=-R$. It then follows that $YRY-XRX=R$ and $XRY+YRX=0$. Hence $\Gamma+\Gamma^{\top}=0$, i.e., $\Gamma$ is a skew symmetric matrix. 
 Substituting $R$ and $\Gamma$ into~\eqref{G} gives $G=\diag[R,\;\;R]$. Therefore, the Hamiltonian is  $\hat{H}=\frac{1}{2}\hat{x}^{\top}G\hat{x}$, which  satisfies the first constraint~\ref{constraint1}. We have 
 \begin{align*}
Q&=-RZ\\
&=-\mathcal{P}^{\top}\begin{bmatrix}
0  &\bar{R}_{21}^{\top}\mathcal{Q}\\
\mathcal{Q}^{\top}\bar{R}_{21} & \mathcal{Q}^{\top}\bar{R}_{22}\mathcal{Q}
\end{bmatrix}\begin{bmatrix}
\bar{z}  &0_{ 1 \times (N-1)}\\
0_{(N-1) \times 1} & \mathcal{Q}^{\top}\bar{Z}\mathcal{Q}
\end{bmatrix}\mathcal{P}  \\
&=-\mathcal{P}^{\top}\begin{bmatrix}
0  &\bar{R}_{21}^{\top} \bar{Z}\mathcal{Q}\\
\mathcal{Q}^{\top}\bar{R}_{21} \bar{z}  & \mathcal{Q}^{\top}\bar{R}_{22}\bar{Z}\mathcal{Q}
\end{bmatrix}\mathcal{P}. 
\end{align*}
We now show that $(Q,P)$ is controllable. Let $\tilde{R}_{22,j}=\diag[r_{j},\;\;-r_{j}]$. We have
\begin{align*}
&\rank\left(\left[\tilde{R}_{21,j} \bar{z} \;\;\;\;\tilde{R}_{22,j}\tilde{Z}_{j}\tilde{R}_{21,j} \bar{z} \right]\right)\\
=&\rank\left(\left[\tilde{R}_{21,j} \bar{z} \;\;\;\;-\tilde{R}_{22,j}\tilde{R}_{21,j}\right]\right)\\
=&\rank\left(\left[\tilde{R}_{21,j} \;\;\;\;\tilde{R}_{22,j}\tilde{R}_{21,j}\right]\right)\\
=& \rank\left(\begin{bmatrix}
\bar{\tau}_{j} &r_{j}\bar{\tau}_{j} \\
\bar{\tau}_{j} &-r_{j}\bar{\tau}_{j}
\end{bmatrix}\right) \; \text{or} \;\rank\left(\begin{bmatrix}
\bar{\tau}_{j} &r_{j}\bar{\tau}_{j} \\
-\bar{\tau}_{j} &r_{j}\bar{\tau}_{j}
\end{bmatrix}\right)\\
=&2,\quad 1\le j\le \frac{(N-1)}{2}. 
\end{align*}
Then it follows that  $\left(\tilde{R}_{22,j}\tilde{Z}_{j}, \tilde{R}_{21,j} \bar{z}  \right)$, $1\le j\le \frac{(N-1)}{2}$, is controllable. In addition, we have $(\tilde{R}_{22,j}\tilde{Z}_{j})^{2}=-\tilde{R}_{22,j}^{2}=-r_{j}^{2}I_{2}$. Using Lemma~2 in~\cite{MPW15:arxiv}, it follows that the matrix $ \tilde{R}_{22,j}\tilde{Z}_{j} $ is diagonalizable and its eigenvalues are either $r_{j}i$ or $-r_{j}i$. Hence  $\tilde{R}_{22,j}\tilde{Z}_{j}$, $1\le j\le \frac{(N-1)}{2}$, have no common eigenvalues.  By Lemma~\ref{lem4}, we have $\left(\bar{R}_{22}\bar{Z}, \bar{R}_{21} \bar{z}  \right)$ is controllable. By Lemma~\ref{lem3}, $\left(\mathcal{Q}^{\top}\bar{R}_{22}\bar{Z}\mathcal{Q}, \mathcal{Q}^{\top}\bar{R}_{21} \bar{z}  \right)$ is controllable. According to Lemma~\ref{lem2}, $\left(\begin{bmatrix}
0  &\bar{R}_{21}^{\top} \bar{Z}\mathcal{Q}\\
\mathcal{Q}^{\top}\bar{R}_{21} \bar{z}  & \mathcal{Q}^{\top}\bar{R}_{22}\bar{Z}\mathcal{Q}
\end{bmatrix},\begin{bmatrix}\tau_{p} \\ 0_{(N-1)\times 1 }\end{bmatrix} \right)$ is controllable. Based on Lemma~\ref{lem3}, we conclude that $(Q,P)$ is controllable. Therefore, by Lemma~\ref{lem0}, the resulting linear quantum system is strictly stable and  generates the desired target pure Gaussian state specified by Equation~\eqref{thm11}. The system–-reservoir coupling vector $\hat{L}$    is given by 
\begin{align*}
\hat{L}= C\hat{x}&=P^{\top}\left[-Z\;\; I_{N}\right]\hat{x} \\
& = - \begin{bmatrix}\tau_{p}\bar{z} &0_{1\times (N-1)}\end{bmatrix}  \mathcal{P} \begin{bmatrix}
\hat{q}_{1} &\hat{q}_{2} &\cdots &\hat{q}_{N}
\end{bmatrix}^{\top}\\
&\quad + \begin{bmatrix}\tau_{p} &0_{1\times (N-1)}\end{bmatrix} \mathcal{P} \begin{bmatrix}
\hat{p}_{1} &\hat{p}_{2} &\cdots &\hat{p}_{N}
\end{bmatrix}^{\top}\\
& = -\tau_{p}\bar{z}\hat{q}_{\ell}+\tau_{p}\hat{p}_{\ell}, 
\end{align*}
where $\ell = \begin{bmatrix}1 &0_{1\times (N-1)}\end{bmatrix}\mathcal{P} \begin{bmatrix}
1 &2 &\cdots &N
\end{bmatrix}^{\top} $.  
It can be seen that the system–-reservoir coupling vector $\hat{L}$  consists of only one Lindblad operator which acts on the $\ell$th mode of the system. Hence it satisfies the second constraint~\ref{constraint2}. This completes the first part of the sufficiency proof.

Suppose the graph matrix $Z$ of an $N$-mode pure Gaussian state (where $N$ is even) satisfies Equation~\eqref{thm12}. We now construct an $N$-mode linear quantum system subject to the two constraints~\ref{constraint1} and~\ref{constraint2}, such that the state is obtained as the unique steady state of this system. We see from Equation~\eqref{thm12} that $\tilde{Z}_{1}=-\frac{1}{\bar{z}}$ and $\tilde{Z}_{j}\in \Delta$, $2\le j\le \frac{N }{2}$. Using Theorem~2 in~\cite{MPW15:arxiv}, it can be shown that $\xi_{v1}=\begin{bmatrix}
1 &1
 \end{bmatrix}^{\top}$ and $\xi_{v2}=\begin{bmatrix}
1 &-1
 \end{bmatrix}^{\top}$ are two eigenvectors of $\tilde{Z}_{j}$, $2\le j\le \frac{N}{2}$. Since $\tilde{Z}_{j}\in\mathbb{C}^{2\times 2}$ and $-\frac{1}{\bar{z}}$ is an eigenvalue of $\tilde{Z}_{j}$,  we have $\tilde{Z}_{j}\xi_{v1}=-\frac{1}{\bar{z}}\xi_{v1}$ or  $\tilde{Z}_{j}\xi_{v2}=-\frac{1}{\bar{z}}\xi_{v2}$, $2\le j\le \frac{N}{2}$. Let $\tilde{R}_{21,1}=\bar{\tau}_{1}$, where $\bar{\tau}_{1} \in \mathbb{R}$ and $\bar{\tau}_{1}\ne 0$. For $2\le j\le \frac{N}{2}$, let $\tilde{R}_{21,j}=\bar{\tau}_{j}\begin{bmatrix}
1 &1
 \end{bmatrix}^{\top}$, where $\bar{\tau}_{j} \in \mathbb{R}$ and $\bar{\tau}_{j}\ne 0$, if $\tilde{Z}_{j}\begin{bmatrix}
1 &1
 \end{bmatrix}^{\top}=-\frac{1}{\bar{z}}\begin{bmatrix}
1 &1
 \end{bmatrix}^{\top}$ and $\tilde{R}_{21,j}=\bar{\tau}_{j}\begin{bmatrix}
1 &-1
 \end{bmatrix}^{\top}$ otherwise. Then we have $\tilde{Z}_{j}\tilde{R}_{21,j}=-\frac{1}{\bar{z}}\tilde{R}_{21,j}$, $1\le j\le \frac{ N }{2}$.  Let $
 \bar{R}_{21}=\begin{bmatrix}\tilde{R}_{21,1}^{\top} \;\cdots\;\tilde{R}_{21,  \frac{N}{2}}^{\top}\end{bmatrix}^{\top}$ and let  $
 \bar{R}_{22}=\diag\left[0,\;\;
r_{2},\; \;
-r_{2},\;\; 
\cdots ,\;\; 
r_{\frac{N}{2}},\; \;
-r_{\frac{N}{2}}\right]$, where $r_{j}\in \mathbb{R}$, $r_{j} \ne 0$, $j\ge 2$, and $|r_{j}|\ne |r_{k}|$  whenever $ j\ne k$.  Then it can be verified that 
$\bar{Z}\bar{R}_{22}\bar{Z}   =- \bar{R}_{22}$  and $\bar{Z}\bar{R}_{21} \bar{z}  = - \bar{R}_{21}$. Let $R=\mathcal{P}^{\top}\begin{bmatrix}
0  &\bar{R}_{21}^{\top}\mathcal{Q}\\
\mathcal{Q}^{\top}\bar{R}_{21} & \mathcal{Q}^{\top}\bar{R}_{22}\mathcal{Q}
\end{bmatrix}\mathcal{P}$, $\Gamma=XRY$ and $P=\mathcal{P}^{\top}\begin{bmatrix}\tau_{p} &0_{1\times (N-1)}\end{bmatrix}^{\top}$ where $\tau_{p}\in \mathbb{C}$ and $\tau_{p}\ne 0$.  Then it can be verified that $R=R^{\top}\in\mathbb{R}^{N\times N}$  and $ZRZ=-R$. It then follows that $YRY-XRX=R$ and $XRY+YRX=0$. Hence $\Gamma+\Gamma^{\top}=0$, i.e., $\Gamma$ is a skew symmetric matrix. 
 Substituting $R$ and $\Gamma$ into~\eqref{G} gives $G=\diag[R,\;\;R]$. Therefore, the Hamiltonian is  $\hat{H}=\frac{1}{2}\hat{x}^{\top}G\hat{x}$, which  satisfies the first constraint~\ref{constraint1}.  We have 
 \begin{align*}
Q&=-RZ\\
&=-\mathcal{P}^{\top}\begin{bmatrix}
0  &\bar{R}_{21}^{\top}\mathcal{Q}\\
\mathcal{Q}^{\top}\bar{R}_{21} & \mathcal{Q}^{\top}\bar{R}_{22}\mathcal{Q}
\end{bmatrix}\begin{bmatrix}
\bar{z}  &0_{ 1 \times (N-1)}\\
0_{(N-1) \times 1} & \mathcal{Q}^{\top}\bar{Z}\mathcal{Q}
\end{bmatrix}\mathcal{P}  \\
&=-\mathcal{P}^{\top}\begin{bmatrix}
0  &\bar{R}_{21}^{\top} \bar{Z}\mathcal{Q}\\
\mathcal{Q}^{\top}\bar{R}_{21} \bar{z}  & \mathcal{Q}^{\top}\bar{R}_{22}\bar{Z}\mathcal{Q}
\end{bmatrix}\mathcal{P}. 
\end{align*}
We now show that $(Q,P)$ is controllable.   Let $\tilde{R}_{22,1}=0$   and $\tilde{R}_{22,j}=\diag[r_{j},\;\;-r_{j}]$, $2\le j\le \frac{N}{2}$. First, it can be seen that $\left(\tilde{R}_{22,1}\tilde{Z}_{1}, \tilde{R}_{21,1} \bar{z}  \right)$  is  controllable. We also have 
\begin{align*}
&\rank\left(\left[\tilde{R}_{21,j} \bar{z} \;\;\;\;\tilde{R}_{22,j}\tilde{Z}_{j}\tilde{R}_{21,j} \bar{z} \right]\right)\\
=&\rank\left(\left[\tilde{R}_{21,j} \bar{z} \;\;\;\;-\tilde{R}_{22,j}\tilde{R}_{21,j}\right]\right)\\
=&\rank\left(\left[\tilde{R}_{21,j} \;\;\;\;\tilde{R}_{22,j}\tilde{R}_{21,j}\right]\right)\\
=&\rank\left(\begin{bmatrix}
\bar{\tau}_{j} &r_{j}\bar{\tau}_{j} \\
\bar{\tau}_{j} &-r_{j}\bar{\tau}_{j}
\end{bmatrix}\right) \; \text{or} \;\rank\left(\begin{bmatrix}
\bar{\tau}_{j} &r_{j}\bar{\tau}_{j} \\
-\bar{\tau}_{j} &r_{j}\bar{\tau}_{j}
\end{bmatrix}\right)\\
=&2,\quad 2\le j\le \frac{N}{2}. 
\end{align*}
Then it follows that  $\left(\tilde{R}_{22,j}\tilde{Z}_{j}, \tilde{R}_{21,j} \bar{z}  \right)$, $2\le j\le \frac{N}{2}$, is controllable. In addition, we have $(\tilde{R}_{22,j}\tilde{Z}_{j})^{2}=-\tilde{R}_{22,j}^{2}=-r_{j}^{2}I_{2}$, $2\le j\le \frac{N}{2}$. Using Lemma~2 in~\cite{MPW15:arxiv}, it follows that the matrix $ \tilde{R}_{22,j}\tilde{Z}_{j} $ is diagonalizable and its eigenvalues are either $r_{j}i$ or $-r_{j} i$, $2\le j\le \frac{N}{2}$. Bearing in mind $\tilde{R}_{22,1}\tilde{Z}_{1}=0$, it follows that  $\tilde{R}_{22,j}\tilde{Z}_{j}$, $1\le j\le \frac{N}{2}$, have no common eigenvalues. By  Lemma~\ref{lem4}, $\left(\bar{R}_{22}\bar{Z}, \bar{R}_{21} \bar{z}  \right)$ is controllable. It then follows from Lemma~\ref{lem3} that $\left(\mathcal{Q}^{\top}\bar{R}_{22}\bar{Z}\mathcal{Q}, \mathcal{Q}^{\top}\bar{R}_{21} \bar{z}  \right)$ is controllable. It follows from Lemma~\ref{lem2} that $\left(\begin{bmatrix}
0  &\bar{R}_{21}^{\top} \bar{Z}\mathcal{Q}\\
\mathcal{Q}^{\top}\bar{R}_{21} \bar{z}  & \mathcal{Q}^{\top}\bar{R}_{22}\bar{Z}\mathcal{Q}
\end{bmatrix},\begin{bmatrix}\tau_{p} &0_{1\times (N-1)}\end{bmatrix}^{\top}\right)$ is controllable. According to Lemma~\ref{lem3}, we conclude that $(Q,P)$ is controllable. Therefore, by Lemma~\ref{lem0}, the resulting linear quantum system is strictly stable and generates the desired target pure Gaussian state specified by Equation~\eqref{thm12}. The system–-reservoir coupling vector $\hat{L}$   is given by 
\begin{align*}
\hat{L}= C\hat{x}&=P^{\top}\left[-Z\;\; I_{N}\right]\hat{x} \\
& = - \begin{bmatrix}\tau_{p}\bar{z} &0_{1\times (N-1)}\end{bmatrix}  \mathcal{P} \begin{bmatrix}
\hat{q}_{1} &\hat{q}_{2} &\cdots &\hat{q}_{N}
\end{bmatrix}^{\top} \\
& \quad + \begin{bmatrix}\tau_{p} &0_{1\times (N-1)}\end{bmatrix} \mathcal{P} \begin{bmatrix}
\hat{p}_{1} &\hat{p}_{2} &\cdots &\hat{p}_{N}
\end{bmatrix}^{\top}\\
& = -\tau_{p}\bar{z}\hat{q}_{\ell}+\tau_{p}\hat{p}_{\ell}, 
\end{align*}
where $\ell = \begin{bmatrix}1 &0_{1\times (N-1)}\end{bmatrix}\mathcal{P} \begin{bmatrix}
1  &2 &\cdots &N
\end{bmatrix}^{\top} $.  
It can be seen that the system–-reservoir coupling vector $\hat{L}$  consists of only one Lindblad operator which acts on the $\ell$th mode of the system. Hence it satisfies the second constraint~\ref{constraint2}. This completes the second part of the sufficiency proof. 
\end{proof}

\subsection*{\textbf{Proof of Lemma~\ref{theorem2}}}
\begin{proof}
Suppose $\mathpzc{Z}\in \Delta $. Then it follows from Theorem~2 in~\cite{MPW15:arxiv} that  $\mathpzc{Z}$ has the form $\mathpzc{Z}=\begin{bmatrix}z_{11} &z_{12}\\z_{12} &z_{11} \end{bmatrix}$, where $z_{11}\in \mathbb{C}$ and $z_{12}\in \mathbb{C}$. Substituting this into the equation $\bigg(\diag[1,-1]
 \mathpzc{Z} \bigg)^{2} =-I_{2}$ gives
 \begin{align}
 z_{11}^{2}-z_{12}^{2}=-1.  \label{thm2pf1}
 \end{align}
 The constraint $\det(\mathpzc{Z}+ \frac{1}{\bar{z}}I_{2})=0$  is equivalent to
 \begin{align}
 \left(z_{11}+\frac{1}{\bar{z}}\right)^{2}-z_{12}^{2}=0. \label{thm2pf2}
 \end{align}
 Combining~\eqref{thm2pf1} and~\eqref{thm2pf2} yields $z_{11}=\frac{\bar{z}^{2}-1}{2\bar{z}}$ and $z_{12}=\pm \frac{\bar{z}^{2}+1}{2\bar{z}}$. Therefore, $\mathpzc{Z}=\begin{bmatrix}\frac{\bar{z}^{2}-1}{2\bar{z}} &\frac{\bar{z}^{2}+1}{2\bar{z}}\\ \frac{\bar{z}^{2}+1}{2\bar{z}} &\frac{\bar{z}^{2}-1}{2\bar{z}} \end{bmatrix}$ or $\begin{bmatrix}\frac{\bar{z}^{2}-1}{2\bar{z}} &-\frac{\bar{z}^{2}+1}{2\bar{z}}\\ -\frac{\bar{z}^{2}+1}{2\bar{z}} &\frac{\bar{z}^{2}-1}{2\bar{z}} \end{bmatrix}$. Since $\mathpzc{Z}= \mathpzc{Z} ^{\top}$, it  can be easily seen that  $\re(\mathpzc{Z})=\re(\mathpzc{Z})^{\top}$ and $\im(\mathpzc{Z})=\im(\mathpzc{Z})^{\top}$. To prove $\im(\mathpzc{Z})>0$, we have to show $\im(z_{11})>0$   and $\left(\im(z_{11})\right)^{2}-\left(\im(z_{12})\right)^{2}>0$. Suppose $\bar{z}=x+iy$, $x \in \mathbb{R}$, $y \in \mathbb{R}$, and $y >0$. Then 
 \begin{align*}
 z_{11}=\frac{\bar{z}^{2}-1}{2\bar{z}}=\frac{1}{2}\left(\bar{z}-\frac{ 1}{\bar{z}}\right)=\frac{1}{2}\left(x+iy-\frac{x-iy}{x^{2}+y^{2}}\right). 
 \end{align*}
 Hence $\im(z_{11})=\frac{1}{2}\left(y+\frac{y}{x^{2}+y^{2}}\right) >0$. Since  
  \begin{align*}
 z_{12}=\pm\frac{\bar{z}^{2}+1}{2\bar{z}}=\pm\frac{1}{2}\left(\bar{z}+\frac{ 1}{\bar{z}}\right)=\pm\frac{1}{2}\left(x+iy+\frac{x-iy}{x^{2}+y^{2}}\right), 
 \end{align*}
 we have 
\begin{align*} 
&\left(\im(z_{11})\right)^{2}-\left(\im(z_{12})\right)^{2}\\
= &
 \frac{1}{4}\left(y+\frac{y}{x^{2}+y^{2}}\right)^{2}-\frac{1}{4}\left(y-\frac{y}{x^{2}+y^{2}}\right)^{2}\\
 = & \frac{y^{2}}{\left(x^{2}+y^{2}\right)^{2}}>0. 
 \end{align*}
Therefore, we have $\im(\mathpzc{Z})>0$. Combining the results above, we conclude that $\Delta =\left\{\begin{bmatrix}\frac{\bar{z}^{2}-1}{2\bar{z}} &\frac{\bar{z}^{2}+1}{2\bar{z}}\\ \frac{\bar{z}^{2}+1}{2\bar{z}} &\frac{\bar{z}^{2}-1}{2\bar{z}} \end{bmatrix},\quad\begin{bmatrix}\frac{\bar{z}^{2}-1}{2\bar{z}} &-\frac{\bar{z}^{2}+1}{2\bar{z}}\\ -\frac{\bar{z}^{2}+1}{2\bar{z}} &\frac{\bar{z}^{2}-1}{2\bar{z}} \end{bmatrix} \right\}$. This completes the proof. 
\end{proof}
%%%%%%%%%%%%%%%%%%%%%%%%%%%%%%%%%%%%%%%%%%%
%%%%%%%%%%%%%%%%%%%%%%%%%%%%%%%%%%%%%%%%%%%
%%%%%%%%%%%%%%%%%%%%%%%%%%%%%%%%%%%%%%%%%%%
%\addtolength{\textheight}{20cm}  
%\bibliographystyle{myIEEEtran}
%\bibliography{C:/GoogleDrive/Dropbox/BibTex/references}

\end{document}